\newcommand{\emptyWord}{\varepsilon}
\newcommand{\Lang}[1]{\mathsf{L}(#1)}
\newcommand{\Parikh}[1]{\Psi(#1)}
\newcommand{\yield}[1]{\mathsf{yield}(#1)}
\newcommand{\ParikhMap}{\Psi}
\newcommand{\powerset}[1]{\mathbb{P}(#1)}
\newcommand{\N}{\mathbb{N}}
\newcommand{\Q}{\mathbb{Q}}
\newcommand{\Z}{\mathbb{Z}}
\newcommand{\starRightarrow}{%
  \mathrel{\vbox{\offinterlineskip\ialign{%
    \hfil##\hfil\cr
    $\scriptscriptstyle\mathop{*}$\cr
    $\Rightarrow$\cr
}}}}
\newcommand{\grammarstep}[1][YYY]{\ifthenelse{\equal{#1}{YYY}}{\Rightarrow}{\Rightarrow_{#1}}}
\newcommand{\grammarsteps}[1][YYY]{\ifthenelse{\equal{#1}{YYY}}{\starRightarrow}{\starRightarrow_{#1}}}
\newcommand{\bzero}{\bm{0}}
\newcommand{\bs}{\bm{s}}
\newcommand{\bp}{\bm{p}}
\newcommand{\bu}{\bm{u}}
\newcommand{\bv}{\bm{v}}
\newcommand{\bw}{\bm{w}}
\newcommand{\bx}{\bm{x}}
\newcommand{\by}{\bm{y}}
\newcommand{\bz}{\bm{z}}
\newcommand{\bb}{\bm{b}}
\newcommand{\bc}{\bm{c}}
\newlist{conditions}{enumerate}{1}
\setlist[conditions,1]{label=(\roman*)}
\crefname{conditionsi}{condition}{conditions}
\Crefname{conditionsi}{Condition}{Conditions}
\newtheorem{theorem}{Theorem}
\newtheorem{proposition}[theorem]{Proposition}
\newtheorem{lemma}[theorem]{Lemma}
\begin{document}
\title[Slice closures of indexed languages and word equations with counting constraints]{Slice closures of indexed languages and \\ word equations with counting constraints}

\newcommand{\mysubsection}[1]{\subsubsection*{#1}}
\author{Laura Ciobanu}
\orcid{0000-0002-9451-1471}
\affiliation{%
  \institution{Hariot-Watt University}
  \streetaddress{P.O. Box 1212}
  \city{Edinburgh}
  \country{UK}
}
\email{L.Ciobanu@hw.ac.uk}

\author{Georg Zetzsche}
\orcid{0000-0002-6421-4388}
\affiliation{%
  \institution{Max Planck Institute for Software Systems (MPI-SWS)}
  \country{Germany}}
\email{georg@mpi-sws.org}

\begin{abstract}
Indexed languages are a classical notion in formal language theory. As the language equivalent of second-order pushdown automata, they have received considerable attention in higher-order model checking.
Unfortunately, counting properties are notoriously difficult to decide for indexed languages: So far, all results about non-regular counting properties show undecidability.

In this paper, we initiate the study of slice closures of (Parikh
images of) indexed languages. A slice is a set of vectors of natural
numbers such that membership of $\bu,\bu+\bv,\bu+\bw$ implies
membership of $\bu+\bv+\bw$. Our main result is that given an indexed language $L$, one can compute a semilinear representation of the smallest slice containing $L$'s Parikh image.

We present two applications. First, one can compute the set of all affine relations satisfied by the Parikh image of an indexed language. In particular, this answers affirmatively a question by Kobayashi: Is it decidable whether in a given indexed language, every word has the same number of $a$'s as $b$'s.

	As a second application, we show decidability of (systems of) word equations with rational constraints and a class of counting constraints: These allow us to look for solutions where a counting function (defined by an automaton) is \emph{not} zero. For example, one can decide whether a word equation with rational constraints has a solution where the number of occurrences of $a$ differs between variables $X$ and $Y$.
\end{abstract}

\keywords{indexed language,
higher-order,
slice,
semilinear,
effective,
word equation,
counting constraints,
length equations,
decidability,
computability}

\begin{CCSXML}
<ccs2012>
<concept>
<concept_id>10003752.10003790</concept_id>
<concept_desc>Theory of computation~Logic</concept_desc>
<concept_significance>500</concept_significance>
</concept>
<concept>
<concept_id>10003752.10003766</concept_id>
<concept_desc>Theory of computation~Formal languages and automata theory</concept_desc>
<concept_significance>500</concept_significance>
</concept>
</ccs2012>
\end{CCSXML}

\maketitle

\section{Introduction}
Indexed languages, introduced in the late 1960s by
Aho~\cite{DBLP:journals/jacm/Aho68}, are a classical notion in formal language
theory. They significantly extend context-free languages, but retain
decidability of basic properties such as emptiness.  Until the beginning of the
1990s, they have been studied with respect to equivalence with second-order
pushdown automata and safe second-order recursion
schemes~\cite{Maslov1976,DBLP:journals/tcs/Damm82,DBLP:journals/iandc/DammG86}, pumping properties~\cite{hayashi1973derivation} and
basic algorithmic properties, such as decidable emptiness~\cite{Maslov1974},
complexity of emptiness~\cite{DBLP:journals/iandc/Engelfriet91}, and decidability of
infinity~\cite{DBLP:conf/stoc/Rounds70,hayashi1973derivation}. 

\subsubsection*{Higher-order model checking} Over the last two decades, indexed languages have seen renewed interest because of their role in \emph{higher-order model checking}: In this approach to verification, one models the behavior of a functional program using a \emph{higher-order recursion scheme} (HORS), for example using Kobayashi's translation~\cite{DBLP:conf/popl/Kobayashi09}. A HORS generates a (potentially infinite) tree, but may also have a semantics as a finite-word language.  The translation into HORS enables the application of a wealth of algorithmic tools available for HORS. Here, a landmark result is the decidability of monadic second-order logic for HORS by Ong~\cite{DBLP:conf/lics/Ong06} (with alternative proofs by Kobayashi \& Ong~\cite{DBLP:conf/lics/KobayashiO09}, Hague, Murawski, Ong, and Serre~\cite{DBLP:conf/lics/HagueMOS08}, and Salvati and Walukiewicz~\cite{DBLP:conf/icalp/SalvatiW11}). More recently, this toolbox was extended to computation of downward closures for word languages of HORS~\cite{DBLP:conf/icalp/Zetzsche15,DBLP:conf/popl/HagueKO16,DBLP:conf/lics/ClementePSW16}, tree languages of safe HORS~\cite{DBLP:journals/fuin/BarozziniCCP22}, and logics expressing boundedness properties~\cite{DBLP:journals/fuin/BarozziniCCP22}.

\subsubsection*{Counting properties} Despite this rich set of algorithmic
results, HORS are notoriously difficult to analyze with respect to
\emph{counting properties}, i.e.\ properties that require precise
counting (without an apriori bound) of letters.  So far, all results about
counting properties show undecidability. Moreover, all these undecidability
properties already hold for languages of second-order recursion schemes, which are exactly the indexed languages~\cite{DBLP:conf/fossacs/AehligMO05}. 

To mention a particularly basic property: it is undecidable whether
an indexed language $L\subseteq\{a,b\}^*$ contains a word with the same number
of $a$'s and $b$'s~\cite[Prop.~7]{DBLP:journals/corr/Zetzsche15}. 
In fact, even inclusion in the Dyck language (the set of all
well-bracketed expressions over pairs of parentheses), which is decidable for
most other kinds of infinite-state
systems~\cite{DBLP:conf/icalp/0001GMTZ23a,DBLP:journals/pacmpl/BaumannGMTZ23,DBLP:journals/acta/BerstelB02,DBLP:conf/fossacs/TozawaM07,DBLP:journals/ipl/ManethS18,DBLP:journals/ijfcs/LobelLS21},
is undecidable for indexed languages, as shown in~\cite[Thm.~18]{DLT2016Uezato}
and~\cite[Thm.~3]{DBLP:journals/tcs/Kobayashi19}. 

These properties are important for verification: The former undecidability
implies that liveness verification is undecidable for second-order asynchronous
programs~\cite{DBLP:journals/lmcs/MajumdarTZ22} and Dyck inclusion is used to
verify reference counting
implementations~\cite{DBLP:journals/pacmpl/BaumannGMTZ23}.

Given this state of
affairs, Kobayashi writes about future directions in his 2013 article on
higher-order model checking~\cite{DBLP:journals/jacm/Kobayashi13}:
\begin{quote}On the theoretical side, it would be a challenge to identify
nonregular properties (such as counting properties) for which recursion scheme
model checking is decidable.\end{quote} 
and points out the negative results published
in~\cite{DBLP:journals/tcs/Kobayashi19}. As a concrete question, Kobayashi
leaves open in~\cite[Remark~1]{DBLP:journals/tcs/Kobayashi19} whether given an
indexed language $L\subseteq\{a,b\}^*$, it is decidable whether 
\begin{equation} L\subseteq\{w\in\{a,b\}^* \mid |w|_a=|w|_b\}. \label{kobayashi-inclusion}\end{equation}
This is a natural property to verify: If we view $a$'s a ``request'' operations
and $b$'s as ``response'' operations, then it says that in each run of the
system, the number of responses matches that of  requests.

\subsubsection*{Slices} In this paper, we identify slices as a surprisingly
useful concept for analyzing indexed languages. Introduced by Eilenberg and
Sch\"{u}tzenberger in 1969~\cite{eilenberg1969rational}, a \emph{slice} is a
subset $S$ of $\N^k$ such that if the vectors $\bu$, $\bu+\bv$, and $\bu+\bw$
belong to $S$, so does $\bu+\bv+\bw$. Eilenberg and Sch\"{u}tzenberger proved
that every slice is a semilinear set~\cite[Prop.~7.3]{eilenberg1969rational}
(an alternative proof, partly inspired by Hirshfeld's proof of the case of
congruences~\cite{hirshfeld1994congruences}, can be found
in~\cite{DBLP:conf/icalp/GanardiMPSZ22}). Eilenberg and Sch\"{u}tzenberger
deduce from this their much more well-known result that every congruence on
$\N^k$ is semilinear. Recall that an equivalence relation $\equiv$ on $\N^k$ is
a \emph{congruence} if $\bu\equiv \bv$ and $\bu'\equiv \bv'$, then
$\bu+\bu'\equiv\bv+\bv'$. When viewed as a subset of $\N^k\times\N^k$, it is
easy to see that every congruence is a slice in $\N^k\times\N^k$. 

The semilinearity of congruences (for which alternative proofs are by
Redei~\cite{Redei1965}, Freyd~\cite{Freyd1968}, and
Hirshfeld~\cite{hirshfeld1994congruences}) has been used in general
decidability results for bisimulation in Petri
nets~\cite{DBLP:conf/stacs/Jancar94}, implies semilinearity of mutual
reachability relations in Petri
nets~\cite{DBLP:conf/concur/Leroux11,DBLP:journals/corr/abs-2210-09931}, and
has a well-studied connection to polynomial
ideals~\cite{mayr1982complexity,DBLP:journals/jsc/KoppenhagenM99,DBLP:journals/iandc/KoppenhagenM00,DBLP:journals/jsc/KoppenhagenM01}.
Although slices are apparently more general than congruences and share
the properties of semilinearity and the ascending chain
condition~\cite{eilenberg1969rational}, slices have received surprisingly
little attention: We are only aware of Eilenberg and
Sch\"{u}tzenberger~\cite{eilenberg1969rational}, results by Grabowski on
effectively semilinear slices~\cite{DBLP:conf/fct/Grabowski81}, and an
algorithm for reachability in bidirected pushdown 
VASS~\cite{DBLP:conf/icalp/GanardiMPSZ22}.

\newcommand{\Dop}{\mathcal{D}}
\newcommand{\Sop}{\mathcal{S}}
\subsection*{Contribution}
Our main result is Theorem \ref{slice-closures}, which shows that given an indexed language $L\subseteq\Sigma^*$, one can
effectively compute a semilinear representation of the smallest slice 
containing $\Parikh{L}\subseteq\N^k$ (with $k=|\Sigma|$), the Parikh image of
$L$. Note that since the intersection of any collection of slices is again a
slice, every set $K\subseteq\N^k$ has a smallest slice containing $K$. We
call this the \emph{slice closure} of $K$ and denote it $\Sop^\omega(K)$.
Hence, we show (in Section \ref{sec:slice_closure}) that for every indexed
language $L$, its slice closure $\Sop^\omega(\Parikh{L})$ is effectively semilinear. 

\begin{theorem}\label{slice-closures}
	The slice closure of each indexed language is effectively semilinear.
\end{theorem}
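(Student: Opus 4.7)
The plan is to reduce the problem to computing the slice closure of an effectively given finitely-generated semilinear set, which can be dispatched using the Eilenberg--Sch\"{u}tzenberger theorem in the effective form considered by Grabowski~\cite{DBLP:conf/fct/Grabowski81}.

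First, I would put the indexed grammar $G$ for $L$ into a convenient normal form in which each derivation step is an atomic operation on the index stack (push one symbol, pop one symbol, or rewrite a nonterminal without touching the top). A derivation tree is then labelled by configurations $(A,\sigma)$ consisting of a nonterminal and an index stack, and its yield is the generated word.

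Second, I would isolate a finite set of \emph{pumping templates}: derivation fragments that depart from a configuration $(A,\sigma_0)$ and return to $(A,\sigma_0\tau)$ for a fixed ``extension'' $\tau$ on the stack, so that each template can be iterated arbitrarily often and contributes a fixed period vector $\bp\in\N^k$ to the Parikh image. The crucial claim is that there is a finite, effectively computable set $B\subseteq\N^k$ of \emph{base} vectors (from ``minimal'' derivations) and a finite, effectively computable set $P\subseteq\N^k$ of period vectors such that
\[
\Sop^\omega(\Parikh{L}) \;=\; \Sop^\omega(B+\N\cdot P).
\]
The right-hand side is the slice closure of a semilinear set, so a semilinear representation can be computed by saturating the generating set under the slice rule until the ascending chain condition for slices forces termination.

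The main obstacle is the second step. Pumping in indexed grammars is genuinely more subtle than in the context-free case: a pushed index may be consumed only much later, creating long-range, nested dependencies that are exactly what makes $\Parikh{L}$ non-semilinear in general. The argument must show that such nested effects collapse \emph{after} taking slice closure. The key tool is the defining implication $\bu,\bu+\bv,\bu+\bw\in S\Rightarrow \bu+\bv+\bw\in S$, which permits two independent pumps based at the same configuration $(A,\sigma)$ to be combined additively inside the closure. Making this precise will likely proceed by induction on the nesting depth of pumps, pushing the complexity of individual indexed derivations out of the Parikh image and into the slice-closure operator, where it can be absorbed by the finitely many periods coming from the templates.
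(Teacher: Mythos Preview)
Your proposal has the right endpoint---reduce to the slice closure of a computable semilinear set and invoke Grabowski---but Step~2 is not an argument, it is the whole theorem. You claim finite, effectively computable sets $B,P\subseteq\N^k$ exist with $\Sop^\omega(\Parikh{L})=\Sop^\omega(B+\langle P\rangle)$, extracted from ``pumping templates''. The obstacle you name is real and your sketch does not overcome it. Configurations $(A,\sigma)$ carry an unbounded stack $\sigma$, so nothing forces a repeated configuration along a derivation; and a template that goes from $(A,\sigma_0)$ to $(A,\sigma_0\tau)$, as you describe it, \emph{pushes} onto the stack and contributes no terminal letters at all---the terminals appear only when $\tau$ is later consumed, possibly in many different ways. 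Hayashi-style pumping for indexed grammars yields nested, multiplicatively interacting fragments (witness $\{a^{2^n}\mid n\ge 0\}$), not additive periods. Your remark that the slice rule lets ``two independent pumps based at the same configuration combine additively'' conflates configurations with Parikh vectors: the slice operation needs $\bu,\bu+\bv,\bu+\bw$ all in the set, and two pumps sharing a grammar configuration need not share a base vector $\bu$. The promised ``induction on nesting depth'' has no stated hypothesis and no mechanism for transporting periods across a push/pop pair.

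The paper avoids direct pumping entirely. It works not with $\Parikh{L}\subseteq\N^T$ but with Parikh images of \emph{sentential forms} over $N\cup T$, so nonterminals become letters, and gives a fixpoint characterization $\Delta(G)=\Dop^\omega(\iniTuple)$ for an operator $\Dop$ on tuples of subsets of $\N^{N\cup T}$. One rule of $\Dop$ substitutes one derivation into a nonterminal leaf of another; the other handles an entire push/pop pair in a single step (push $f$ at the root, run a subderivation, then pop $f$ simultaneously at every nonterminal leaf), and this is what hides the unbounded stack. The algorithm iterates $K\mapsto\Sop^\omega(\Dop(K))$ from $\iniTuple$; termination is the ascending chain condition for slices, and correctness rests on a commutation lemma $\Dop(\Sop(K))\subseteq\Sop^2(\Dop(K))$---this is the precise form of your intuition that pumps combine additively inside the closure. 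Finally the slice closure of $\Parikh{L}$ is extracted via $\Sop^\omega(U\cap\N^T)=\Sop^\omega(U)\cap\N^T$; the paper shows this extraction fails for the affine hull and even the Zariski closure, which is why slices, and not a coarser invariant, are the right tool.
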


We thus obtain an effectively semilinear overapproximation of $\Parikh{L}$ that
is amenable to algorithmic analysis (recall that semilinear sets are precisely
those expressible in Presburger arithmetic). 

Moreover, this overapproximation preserves pertinent counting properties of
$L$, meaning that \cref{slice-closures} yields (to our knowledge) the first
(non-regular) decidable counting properties of indexed languages: 

\begin{corollary}\label{prop:Kobayashi}
If $L$ is an indexed language on $\{a, b\}$ given by an indexed grammar, it is decidable whether every word has the same number of $a$'s as $b$'s.
\end{corollary}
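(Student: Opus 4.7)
The plan is to view this as an instance of semilinear inclusion made possible by Theorem~\ref{slice-closures}. Let $L \subseteq \{a,b\}^*$ be the given indexed language and identify $\Sigma = \{a,b\}$ with the coordinates of $\N^2$, so that the property ``every word has as many $a$'s as $b$'s'' is equivalent to $\Parikh{L} \subseteq D$, where $D = \{(m,n)\in\N^2 \mid m=n\}$.

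The first step is to check that $D$ is itself a slice of $\N^2$. This is immediate: if $\bu,\,\bu+\bv,\,\bu+\bw \in D$, writing $\bu=(a,a)$ forces $\bv=(b,b)$ and $\bw=(c,c)$ for some $b,c\in\N$, whence $\bu+\bv+\bw=(a+b+c,a+b+c)\in D$. (Indeed $D$ is even closed under addition, so trivially a slice.)

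Since $\Sop^\omega(\Parikh{L})$ is by definition the smallest slice containing $\Parikh{L}$, and $D$ is a slice, the inclusion $\Parikh{L}\subseteq D$ is equivalent to $\Sop^\omega(\Parikh{L}) \subseteq D$. By \cref{slice-closures}, from an indexed grammar for $L$ we can effectively compute a semilinear representation of $\Sop^\omega(\Parikh{L})$. The set $D$ is clearly semilinear. Since inclusion between two effectively given semilinear sets is decidable (as semilinear sets are exactly those definable in Presburger arithmetic), we can decide $\Sop^\omega(\Parikh{L}) \subseteq D$, and hence the original property.

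There is no real obstacle in the corollary itself: the content is entirely in \cref{slice-closures}. The only conceptual point to verify is that the target set $D$ is a slice, and the key idea exploited is that the slice closure preserves containment in any slice — precisely the kind of counting property not preserved by, say, the downward closure.
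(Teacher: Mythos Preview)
Your proof is correct and follows essentially the same approach as the paper: show that the diagonal $D=\{(m,m)\mid m\in\N\}$ is a slice, reduce $\Parikh{L}\subseteq D$ to $\Sop^\omega(\Parikh{L})\subseteq D$ by minimality of the slice closure, compute the latter via \cref{slice-closures}, and decide the resulting semilinear inclusion. One small correction: your parenthetical ``closed under addition, so trivially a slice'' is false in general (e.g.\ the numerical monoid $\{0,3,5,6,8,9,10,\ldots\}$ is not a slice, since $3,5,5$ are in it but $3+2+2=7$ is not), though your direct verification of the slice property for $D$ is fine and is all that is needed.
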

To prove this, the key fact is that the vectors in
$\Parikh{L}$ all satisfy a system of linear equations over the integers if and only if
the vectors in $\Sop^\omega(\Parikh{L})$ do. One implication is clear. For the other direction, suppose that all vectors in $\Parikh{L}$ satisfy the system of linear equations $A\bx=\bb$ with $A\in\Z^{n\times k}$, $\bb\in\Z^n$. Since
the set $S$ of solutions to this system in $\N^k$ is a slice (see Example \ref{example:slices}) and $\Parikh{L} \subseteq S$, we must have $\Sop^\omega(\Parikh{L}) \subseteq S$ by the minimality of $\Sop^\omega(\Parikh{L})$; thus the vectors in $\Sop^\omega(\Parikh{L})$ satisfy the system $A\bx=\bb$.  Therefore, in particular, we give a positive answer to
Kobayashi's question and prove \cref{prop:Kobayashi}: One can decide the inclusion \eqref{kobayashi-inclusion}
by checking whether $\Sop^\omega(\Parikh{L})\subseteq\{(m,m) \mid m\in\N\}$.

\newcommand{\aff}{\mathsf{aff}}
Combined with a result by Karr~\cite{DBLP:journals/acta/Karr76}
(see~\cite[Theorem~1]{DBLP:conf/icalp/Muller-OlmS04} for a simpler and more
explicit proof), \cref{slice-closures} also implies that we can compute the  ``strongest affine invariant'' of
$\Parikh{L}$. 
An \emph{affine relation} is a vector $\bu\in\Z^k$ and a number $c\in\Z$,
and it is \emph{satisfied by $K\subseteq\N^k$} if $\bu^\top\bx=c$ for every
$\bx\in K$.
The (rational) \emph{affine hull} of a set $U\subseteq\Q^k$ is the set
\begin{multline*} \aff(U)=\{\lambda_1\bu_1+\cdots+\lambda_n\bu_n \mid n\ge 0,~\bu_1,\ldots,\bu_n\in U, \\
	\lambda_1,\ldots,\lambda_n\in\Q,~\lambda_1+\cdots+\lambda_n=1\}. \end{multline*}
A non-empty set $A\subseteq\Q^k$ is an \emph{affine space} if $\aff(A)=A$. It
is a basic fact in linear algebra that affine spaces are exactly those subsets
of the form $\bv+V$, where $\bv\in\Q^k$ and $V\subseteq\Q^k$ is a linear
subspace. Equivalently, affine spaces are exactly the non-empty solution sets
in $\Q^k$ to equation systems $B\bx=\bc$ with $B\in\Q^{m\times k}$ and
$\bc\in\Q^m$. By \emph{computing $\aff(U)$} for some given set $U\subseteq\Q^k$
we mean computing some $B\in\Q^{m\times k}$ and $\bc\in\Q^m$ such that
$\aff(U)=\{\bx\in\Q^k \mid B\bx=\bc\}$. Some authors call this equation system the
\emph{strongest affine invariant} of $U$, meaning a finite set of affine relations satisfied by
$\Parikh{L}$ that implies every affine relation satisfied by $\Parikh{L}$.
\begin{corollary}\label{affine-hull}
	Given an indexed language $L$, we can effectively compute $\aff(\Parikh{L})$.
\end{corollary}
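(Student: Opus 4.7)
The plan is to reduce computing $\aff(\Parikh{L})$ to a two-step pipeline: first, invoke \cref{slice-closures} to obtain a semilinear representation of $\Sop^\omega(\Parikh{L})$, and second, extract the affine hull of that semilinear set via standard linear algebra. The correctness of this pipeline reduces to the identity $\aff(\Parikh{L})=\aff(\Sop^\omega(\Parikh{L}))$.

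To justify this identity, the key observation --- a close cousin of the argument sketched for \cref{prop:Kobayashi} --- is that every affine subspace $A\subseteq\Q^k$, intersected with $\N^k$, is a slice. Indeed, if $\bu,\bu+\bv,\bu+\bw\in A\cap\N^k$, then $\bu+\bv+\bw=(\bu+\bv)+(\bu+\bw)-\bu$ is an affine combination (the coefficients sum to $1+1-1=1$) of points of $A$, so it lies in $A\cap\N^k$. Applied to $A:=\aff(\Parikh{L})$, this produces a slice $A\cap\N^k$ containing $\Parikh{L}$, so by minimality of the slice closure we get $\Sop^\omega(\Parikh{L})\subseteq A\cap\N^k\subseteq\aff(\Parikh{L})$. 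Taking affine hulls yields $\aff(\Sop^\omega(\Parikh{L}))\subseteq\aff(\Parikh{L})$, and the reverse inclusion is immediate from $\Parikh{L}\subseteq\Sop^\omega(\Parikh{L})$.

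Finally, given a semilinear presentation $\Sop^\omega(\Parikh{L})=\bigcup_{i=1}^{m}(\bb_i+\N\bp_{i,1}+\cdots+\N\bp_{i,n_i})$ provided by \cref{slice-closures}, the affine hull equals $\bb_1+V$, where $V\subseteq\Q^k$ is the $\Q$-linear span of $\{\bb_i-\bb_1\mid 2\le i\le m\}\cup\{\bp_{i,j}\mid 1\le i\le m,\ 1\le j\le n_i\}$: every such vector must lie in the direction space of any affine space containing $\Sop^\omega(\Parikh{L})$, and conversely $\bb_1+V$ is itself an affine space containing each $\bb_i+\bp_{i,j}$. A basis for $V$ is computed by Gaussian elimination; describing its orthogonal complement then gives the desired matrix $B\in\Q^{n\times k}$ and vector $\bc\in\Q^n$ with $\aff(\Parikh{L})=\{\bx\in\Q^k\mid B\bx=\bc\}$. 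Since all the genuine difficulty is packaged into \cref{slice-closures}, no further obstacle remains; the remainder is routine linear algebra.
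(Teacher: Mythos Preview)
Your proof is correct and follows essentially the same line as the paper's: both establish $\aff(\Parikh{L})=\aff(\Sop^\omega(\Parikh{L}))$ via the observation that $A\cap\N^k$ is a slice for every affine subspace $A$, and then appeal to \cref{slice-closures}. The only difference is in the final step: the paper invokes Karr's algorithm (via M\"uller-Olm and Seidl) for affine programs to extract the affine hull of the resulting semilinear set, whereas you give a direct and more self-contained computation from the semilinear representation by Gaussian elimination---a mild simplification, but not a different route.
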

We derive \cref{affine-hull} from \cref{slice-closures} in \cref{basic-notions}.

\subsection*{Word equations with counting constraints}
We also present an application of \cref{slice-closures} to word equations.
Let $A$ and $\Omega$ be finite alphabets, where we consider $\Omega=\{X_1, \dots, X_n\}$ as a collection of
variables and $A$ as letters. A \emph{word equation} over $(A,\Omega)$ is a
pair $(U,V)$ with $U,V\in (A\cup\Omega)^*$, where $U$ and $V$ should be seen as the two sides
of an equation. The core question is then whether the equation has any solutions, that is, whether we can substitute the variable occurrences in $U$
and $V$ by words over $A$ so that the two sides become equal. 
More formally, a solution to $(U,V)$ is a morphism $\sigma:(A\cup\Omega)^* \to A^*$ that fixes each $a\in A$ such that $\sigma(U)=\sigma(V)$. 
\begin{example}
Let $A=\{a, b\}$ and $\Omega=\{X, Y, Z\}$. The pair of words $(X^2aY,YZ^2a)$ represents the word equation $X^2aY=YZ^2a$. One possible solution is $\sigma(X)=ab, \sigma(Y)=a, \sigma(Z)=ba$.
\end{example}

\newcommand{\PSPACE}{\mathsf{PSPACE}}
The study of word equations has a long history. Decidability (of solvability) was shown by Makanin~\cite{Makanin1977} in 1977, and a $\PSPACE$ upper bound by Plandowski~\cite{Plandowski}. This was extended to word equations with \emph{rational constraints} by Diekert, Guti\'{e}rrez, and Hagenah~\cite{dgh05IC}. 

However, imposing \emph{counting constraints}, e.g.\ that the number of $a$'s in $X_1$ is the same as in $X_2$, makes word equations much more difficult. In a parallel to indexed languages, allowing such conditions without restrictions (we make this more precise in \cref{sec:wordeqs}) leads to undecidability, as shown by B\"{u}chi and Senger~\cite{buchi1990definability}. Whether word equations with rational constraints and \emph{length equations}, which asks e.g.\ that $X_1$ have the same length as $X_2$, are decidable remains a major open problem~\cite{WE_with_length_constraints_survey}, even in the case of quadratic equations~\cite{majumdar2021quadratic}. More generally, despite a lot of recent interest in extensions of word equations, most results show undecidability. Moreover, there is no decidability result that extends general word equations with rational constraints and any kind of counting constraints. See~\cite{Amadini,majumdar2021quadratic,DayManeaWE} for recent surveys.

Our application of \cref{slice-closures} introduces a class of counting constraints---we call them \emph{counting inequations}---for which word equations (with rational constraints) remain decidable:
\begin{theorem}\label{counting-inequations-decidable}
The problem of word equations with rational constraints and counting inequations is decidable.
\end{theorem}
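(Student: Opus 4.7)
The plan is to combine \cref{slice-closures} with the known result of Ciobanu, Diekert, and Elder (refining Diekert, Guti\'errez, and Hagenah) that the set of solutions of a word equation with rational constraints can be represented as an EDT0L---hence indexed---language. The strategy is to encode the solutions as an indexed language whose Parikh image records all quantities referenced by the counting inequation, and then to decide the inequation by analyzing the slice closure of that Parikh image.

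Concretely, I would encode the solution tuples $(\sigma(X_1),\ldots,\sigma(X_n))$ as words in an indexed language $L$ over a labeled alphabet, replacing each $a\in A$ by tagged copies $a^{(1)},\ldots,a^{(n)}$ according to the variable in whose substitution the letter appears, so that $\Parikh{L}\subseteq\N^{n|A|}$ records per-variable letter counts. For counting functions defined by (weighted) automata rather than by raw letter counts, I would take a product construction with those automata---using closure of indexed languages under intersection with regular languages---so that additional Parikh coordinates capture the automaton outputs. In this representation, a counting inequation becomes a linear form $f$, and it is violated by every solution iff $\Parikh{L}\subseteq S$, where $S=\{\bx\in\N^{n|A|}\mid f(\bx)=0\}$. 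Since $S$ is a slice (as in \cref{example:slices}), minimality of the slice closure gives $\Parikh{L}\subseteq S$ iff $\Sop^\omega(\Parikh{L})\subseteq S$. \cref{slice-closures} provides an effective semilinear representation of $\Sop^\omega(\Parikh{L})$, and containment of a semilinear set in the hyperplane $S$ reduces to Presburger arithmetic, so the question is decidable.

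The main obstacle I anticipate is extending the argument from a single counting inequation to \emph{conjunctions}: the condition ``$f_1(\bv)\ne 0\wedge\cdots\wedge f_m(\bv)\ne 0$'' asks that $\Parikh{L}$ meet the complement of a finite union of slices, and such unions are in general not themselves slices, so the minimality argument above does not apply directly. Overcoming this will likely require either iteratively enriching $L$ with rational constraints that witness non-vanishing of each $f_i$ (so that each conjunct is enforced in turn), or a more delicate structural argument tailored to Parikh images of indexed languages. A secondary, more routine obstacle is verifying that the labeled encoding and product constructions produce \emph{effectively} indexed languages, which should follow from the constructive nature of the EDT0L representation and standard closure results.
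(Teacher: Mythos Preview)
Your proposal is correct and follows essentially the same route as the paper: represent the solution set as an indexed language via Ciobanu--Diekert--Elder, transform it so that the Parikh image records the counting data, and then reduce the inequation to inclusion of the slice closure in a linear solution set, decidable by \cref{slice-closures}. The paper packages your ``tagging plus product construction'' step as a single rational transduction $R_f$ applied to the EDT0L solution language (crucially using that \emph{indexed}, unlike EDT0L, languages are closed under rational transductions), but the content is the same.

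One clarification: your anticipated obstacle about conjunctions does not arise for the theorem as stated. The counting function is $\Z^n$-valued and the inequation asks for $f(\enc{\sigma})\ne\bzero$, i.e.\ that \emph{some} component is nonzero; the negative instances are therefore characterized by $\Parikh{R_fL}\subseteq\bigcap_{i=1}^n\{\bx:\bx(a_i)=\bx(\bar a_i)\}$, and this intersection of hyperplanes is still a slice. The genuinely harder question you describe---a conjunction $f_1\ne 0\wedge\cdots\wedge f_m\ne 0$---is not part of the problem and is not addressed by the paper either.
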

Here, we rely on a recent result by Ciobanu, Diekert, and
Elder~\cite{DBLP:conf/icalp/CiobanuDE15} saying that solution sets of word
equations (with rational constraints) are EDT0L, and thus indexed languages.
For example, it is decidable whether there is a solution $\sigma$ to a word
equation $(U,V)$ such that $|\sigma(X_1)|_a\ne|\sigma(X_2)|_a$. To our
knowledge, this is the first strict extension of word equations with rational
constraints by any kind of counting constraints. We prove
\cref{counting-inequations-decidable} in \cref{sec:wordeqs}.

\subsection*{Key ingredients}
Our algorithm for \cref{slice-closures} is a \emph{saturation procedure} where for some monotone operator $F$, we compute a fixpoint of $F$ by applying $F$ repeatedly. By relying on an ascending chain condition, such algorithms terminate with a fixpoint after finitely many steps.

Saturation is a ubiquitous idea, but closest to our approach are algorithms where the ascending chain condition comes from (i)~dimension of vector spaces or affine spaces, (ii)~Hilbert's Basis Theorem, which implies that chains of polynomial ideals become stationary, (iii)~well-quasi-orderings, where chains of upward closed sets become stationary. Examples of (i) are Eilenberg's algorithm for equivalence of weighted automata over fields~\cite[p.143--145]{Eilenberg1974} and Karr's computation of strongest affine invariants~\cite{DBLP:journals/acta/Karr76} (see also~\cite{DBLP:conf/icalp/Muller-OlmS04} and the stronger result~\cite{DBLP:journals/jacm/HrushovskiOPW23}) and applications~\cite{DBLP:journals/tcs/HarjuK91}. Examples for (ii) are the algorithm for polynomial invariants of affine programs~\cite{DBLP:journals/ipl/Muller-OlmS04}, several algorithms for equivalence of transducers or register machines~\cite{DBLP:conf/lics/BenediktDSW17,DBLP:conf/focs/SeidlMK15,DBLP:conf/rp/FiliotR17,DBLP:conf/fsttcs/BoiretPS18} (see~\cite{DBLP:journals/siglog/Bojanczyk19a} for a unifying result). An example of (iii) is the classical backward search algorithm to decide coverability in well-structured transition systems~\cite{DBLP:journals/tcs/FinkelS01,DBLP:conf/lics/AbdullaCJT96}. Saturation over finite domains---also an example of (iii)---is used prominently in the analysis of pushdown systems~\cite{DBLP:conf/concur/BouajjaniEM97}.
(Some algorithms do not compute an ascending chain, but rather guess a fixpoint of $F$, e.g.~\cite{DBLP:journals/pacmpl/ChistikovMS22,DBLP:conf/icalp/GanardiMPSZ22,DBLP:journals/siglog/Bojanczyk19a,DBLP:journals/iandc/BlondinFM18}).

\subsubsection*{Slices} The key insight of this work is that the slice closure
of a set preserves enough information to permit a saturation procedure for
indexed languages that preserves counting properties: Our procedure crucially
relies on properties of slices that the concepts (i--iii) do not have. For
(iii), this is obvious, because the upward closure of indexed languages does
not preserve any precise counting properties\footnote{Upward closures for
indexed languages are known to be computable using this approach (see
e.g.~\cite[p.124]{DBLP:phd/dnb/Zetzsche16}).}. For (i) and (ii), we explain
this now. 

\subsubsection*{A fixpoint characterization}
On the technical level, one ingredient is a fixpoint characterization
(\cref{fixpoint-indexed}) of Parikh images of sentential forms---that consist
just of non-terminals and terminals (and no indexes)---of indexed
grammars\footnote{In fact, the proof could easily be modified to yield a
fixpoint characterization of word languages just in terms of sentential forms
without indexes.}. This appears to be new: There are fixpoint characterizations
of indexed languages, but they concern sentential forms \emph{including
indexes}~\cite{DBLP:conf/sas/CampionPG19}. Here, the absence of indexes is key
to having a fixpoint operator just on vectors of natural numbers.

This fixpoint characterization---via an operator called $\Dop$ (for
derivation)---allows us to exploit slices: Our algorithm iteratively applies
$\Dop$ to obtain a larger set of vectors, and then immediately computes the
slice closure each time. This is possible because each intermediate set is
semilinear and it is known that slice closures of semilinear sets are
computable (see \cref{slice-closure-semilinear}). Since slices satisfy an
ascending chain condition, this will result in the slice closure of the
Parikh images of sentential forms (without indexes).

However, a key property of slices is required in the end: We need to extract
the slice closure of the \emph{generated language}, i.e.\ those sentential forms that
\emph{only contain terminals}, from the slice closure of \emph{all} sentential forms.
We show that this is possible for slices, but that such an extraction is not
possible when using the methods (i) and (ii) above: For example, even to just
compute the affine hull of $\Parikh{L}$, it would
not suffice to compute the affine hull (or even the Zariski closure!) of the Parikh images of sentential forms. We prove this in the discussion after \cref{intersection-commutes}.

\subsection*{Organization}
The paper is organized as follows. In \cref{basic-notions}, we fix notations
and introduce slices, indexed grammars, and other notions. In
\cref{sec:slice_closure}, we prove our main result \cref{slice-closures}. In
\cref{sec:wordeqs}, we present and prove \cref{counting-inequations-decidable}.

\section{Basic notions: Slices and indexed grammars}\label{basic-notions}

\mysubsection{Notation}
For a finite alphabet $\Sigma$, we write $\N^\Sigma$ for the set of maps
$\Sigma\to\N$. We will consider $\N^\Sigma$ as a commutative monoid with
pointwise addition. If $|\Sigma|=k$, then $\N^\Sigma$ is clearly isomorphic to
$\N^k$ and so we will often transfer notation and terminology from one to the
other (in case the chosen isomorphism doesn't matter).  Moreover, we identify
each element $a\in\Sigma$ with the element $\bu_a\in\N^\Sigma$ with
$\bu_a(a)=1$ and $\bu_a(b)=0$ for all $b\in\Sigma\setminus\{a\}$.  For each
finite alphabet $\Sigma$, the \emph{Parikh map}
$\ParikhMap\colon\Sigma^*\to\N^\Sigma$ is the map with $\Parikh{w}(a)=|w|_a$
for $w\in\Sigma^*$ and $a\in\Sigma$, where $|w|_a$ is the number of occurrences
of $a$ in the word $w$.
For a set $S$, we denote its powerset by $\powerset{S}$.
\mysubsection{Semilinear sets}
A subset $K\subseteq\N^k$ is \emph{linear} if there is a vector $\bb\in\N^k$ and a finite subset $P=\{\bp_1,\ldots,\bp_r\}\subseteq\N^k$ such that $K=\bb+\langle P\rangle$, where we use the notation
\[ \bb+\langle P\rangle =\{\bb+\lambda_1\bp_1+\cdots+\lambda_r\bp_r \mid \lambda_1,\ldots,\lambda_r\in\N\}. \]
A subset $K\subseteq\N^k$ is \emph{semilinear} if it is a finite union of linear sets.
In particular, a semilinear set has a \emph{semilinear representation} $R=(\bb_1,P_1;\ldots,\bb_n,P_n)$ consisting of vectors $\bb_1,\ldots,\bb_n\in\N^k$ and finite subsets $P_1,\ldots,P_n\subseteq\N^k$. The set \emph{represented by $R$}, denoted $\langle R\rangle$, is then $\bb_1+\langle P_1\rangle \cup \cdots\cup \bb_n+\langle P_n\rangle$.

In this work, we distinguish between (i)~sets for which we
can prove semilinearity (potentially non-constructively) and (ii)~sets for
which we can even \emph{compute} a semilinear representation algorithmically,
from the available data. In the latter case, we speak of \emph{effectively
semilinear sets}. When we use this terminology, it will always be clear from
the context what the available data is (e.g.\ an indexed grammar, a semilinear
representation of a related set, etc.).

\emph{Presburger arithmetic} is the first-order theory of $(\N,+)$, the natural numbers with addition. It is well-known that a subset of $\N^k$ is definable in Presburger arithmetic if and only if it is semilinear~\cite[Theorem~1.3]{ginsburg1966semigroups}. This equivalence is effective, meaning we can algorithmically translate between Presburger formulas and semilinear representations.

\mysubsection{Slices and slice closures}
A subset $K\subseteq\N^k$ is a \emph{slice} if for any $\bu,\bu+\bv,\bu+\bw\in
K$ with $\bu,\bv,\bw\in\N^k$, we also have $\bu+\bv+\bw\in K$. We also say that
$\bu+\bv+\bw$ is \emph{obtained from $\bu,\bu+\bv,\bu+\bw$ using the slice
operation}. Clearly, the intersection of any collection of slices is also a slice. Thus, for every subset $K\subseteq\N^k$, there is a smallest slice that includes $K$. We call it the \emph{slice closure of $K$} and denote it by $\Sop^\omega(K)$.

\begin{example}\label{example:slices} The following are three kinds of examples of slices. 
\begin{enumerate}[label=(\roman*)]
	\item\label{example-solution-sets} The first is \emph{solution sets to linear diophantine equations}: For a matrix $A\in\Z^{n\times k}$ and a vector $\bb\in\N^n$, the solution set $\{\bx\in\N^k \mid A\bx=\bb\}$ is a slice: If $\bu,\bu+\bv,\bu+\bw$ are solutions, then clearly $A\bv=\bzero$ and $A\bw=\bzero$ and thus $A(\bu+\bv+\bw)=\bb$.

\item Another example is \emph{upward closed sets}: We say that $K\subseteq\N^k$ is \emph{upward closed} if $\bu\in K$ and $\bu\le\bv$ imply $\bv\in K$, where $\leq$ is component-wise inequality. Clearly, every upward closed set is a slice. 

\item Third, subtractive monoids are slices: we say that $K\subseteq\N^k$ is a \emph{subtractive submonoid} if $\bu,\bu+\bv\in K$ imply that $\bv\in K$.
\end{enumerate}
\end{example}

We will rely on two facts about slices by Eilenberg and Sch\"{u}tzenberger. The first is that every slice is semilinear:
\begin{theorem}[Eilenberg \& Sch\"{u}tzenberger]\label{slices-semilinear}
	Every slice is semilinear.
\end{theorem}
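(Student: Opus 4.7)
The plan is to decompose $K$ as a finite union of translates of submonoids of $\N^k$ and then invoke the classical theorem that every submonoid of $\N^k$ is semilinear. For each $\bu \in K$, define the \emph{difference set at $\bu$} by $K_\bu := \{\bv \in \N^k : \bu + \bv \in K\}$. First I would verify that $K_\bu$ is always a submonoid of $\N^k$: it contains $\bzero$ because $\bu \in K$, and if $\bv,\bw \in K_\bu$ then $\bu, \bu+\bv, \bu+\bw$ all lie in $K$, so the defining slice property yields $\bu+\bv+\bw \in K$, i.e., $\bv+\bw \in K_\bu$.

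Next, by Dickson's lemma the set $M \subseteq K$ of componentwise-minimal elements of $K$ is finite, and every element of $K$ dominates at least one element of $M$. From this I would establish the decomposition
\[ K \;=\; \bigcup_{\bs \in M} \bigl(\bs + K_\bs\bigr). \]
The inclusion $(\supseteq)$ is immediate from the definition of $K_\bs$. For $(\subseteq)$, given $\bv \in K$ pick any $\bs \in M$ with $\bs \le \bv$; then $\bv - \bs \in \N^k$ and $\bs + (\bv - \bs) = \bv \in K$, so $\bv - \bs \in K_\bs$, whence $\bv \in \bs + K_\bs$.

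Since $M$ is finite and a finite union of semilinear sets is semilinear, it remains to show that each set $\bs + K_\bs$ is semilinear, and for this it is enough that the submonoid $K_\bs$ is semilinear. This reduces the whole statement to the classical result---also due to Eilenberg and Sch\"utzenberger in~\cite{eilenberg1969rational}---that every submonoid of $\N^k$ is a semilinear subset of $\N^k$. This last invocation is the main obstacle of the proof: it is \emph{not} a direct consequence of Dickson's lemma, since submonoids of $\N^k$ need not be finitely generated (for instance, $\{\bzero\} \cup \{(a,b) \in \N^2 : a \ge 1\}$ is a submonoid whose single minimal nonzero element $(1,0)$ does not generate it). A self-contained argument for submonoid-semilinearity would typically proceed by induction on the dimension $k$, handling the base case via eventual periodicity of numerical monoids and the inductive step via a projection together with a fiber analysis in lower dimension.
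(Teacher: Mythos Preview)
Your reduction is correct: the difference sets $K_{\bs}$ are indeed submonoids precisely because of the slice axiom, and the Dickson decomposition $K=\bigcup_{\bs\in M}(\bs+K_{\bs})$ holds as you argue. Note, however, that the paper does not actually give its own proof of this theorem; it simply cites Eilenberg and Sch\"utzenberger~\cite[Proposition~7.3]{eilenberg1969rational} and the alternative argument in~\cite{DBLP:conf/icalp/GanardiMPSZ22}. Your write-up is in fact a faithful outline of the original Eilenberg--Sch\"utzenberger route: they, too, first establish that every submonoid of $\N^k$ is semilinear and then reduce the slice case to that via exactly the minimal-element decomposition you describe. So there is nothing to compare against in the paper itself; you have reconstructed the cited proof rather than diverged from it. Your caveat about the black box is well placed: the submonoid-semilinearity step carries all the real weight, and a full write-up would need either the dimension induction you sketch or a pointer to the relevant proposition in~\cite{eilenberg1969rational}.
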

This is shown in \cite[Proposition 7.3]{eilenberg1969rational} (see
also~\cite{DBLP:conf/icalp/GanardiMPSZ22} for an alternative proof). 
Note that whether a given slice is \emph{effectively} semilinear depends on how it is represented.
In fact, there are many situations in which one cannot compute a semilinear representation for a given slice. For example, the set of reachable configurations in a lossy counter system is always a slice (since it is downward closed), but it cannot be computed for a given lossy counter system~\cite[Remark 14]{DBLP:journals/tcs/Mayr03}.
We will also rely on the ascending chain condition:
\begin{proposition}[Eilenberg Sch\"{u}tzenberger]
	If $S_1\subseteq S_2\subseteq\cdots\subseteq\N^k$ are slices, then
	there is an index $m$ with $S_m=\bigcup_{i\ge 0} S_i$.
\end{proposition}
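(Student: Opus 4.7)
The plan is to show that the union $S := \bigcup_{i\ge 0} S_i$ is itself a slice, apply Theorem \ref{slices-semilinear} to obtain a finite semilinear representation of $S$, and then argue that some single $S_m$ already contains the finite generating data of that representation; the slice property of $S_m$ will then force $S_m = S$.

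First I would check that $S$ is a slice: given $\bu,\bu+\bv,\bu+\bw \in S$, all three lie in a common $S_N$ (take $N$ to be the largest of the three indices witnessing membership), and then $\bu+\bv+\bw\in S_N\subseteq S$ since $S_N$ is itself a slice. Applying Theorem \ref{slices-semilinear} to $S$ yields a representation $S = \bigcup_{j=1}^{n}(\bb_j + \langle P_j\rangle)$ with each $P_j\subseteq\N^k$ finite. Let $F = \{\bb_j : 1\le j\le n\}\cup\{\bb_j+\bp : 1\le j\le n,\ \bp\in P_j\}$; this is a finite subset of $S$, so there exists an index $m$ with $F\subseteq S_m$.

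The core step is then to show $S\subseteq S_m$, for which it suffices to prove $\bb_j+\bx \in S_m$ whenever $\bx\in\langle P_j\rangle$. I would argue by induction on the smallest integer $n$ such that $\bx$ can be written as a sum $\bp_{i_1}+\cdots+\bp_{i_n}$ of elements of $P_j$ (with $n=0$ meaning $\bx=\bzero$). The cases $n=0$ and $n=1$ are immediate from $F\subseteq S_m$. For $n\ge 2$, set $\bx' = \bp_{i_2}+\cdots+\bp_{i_n}$, so that both $\bb_j+\bp_{i_1}$ and $\bb_j+\bx'$ lie in $S_m$ (the former by the base case, the latter by induction). Applying the slice operation of $S_m$ to the triple $\bb_j,\ \bb_j+\bp_{i_1},\ \bb_j+\bx'$ then produces $\bb_j+\bp_{i_1}+\bx' = \bb_j+\bx \in S_m$. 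Hence $S_m = S$, and the chain stabilizes at $m$.

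There is no real obstacle: the only nontrivial ingredient is the previously established Theorem \ref{slices-semilinear}, and everything else is a direct induction using the slice axiom applied within $S_m$.
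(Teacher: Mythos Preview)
Your proof is correct and follows essentially the same approach as the paper: both arguments use the semilinearity of $S$ to extract the same finite set $F$ of base vectors and base-plus-period vectors, locate an index $m$ with $F\subseteq S_m$, and then use the slice property of $S_m$ to recover all of $S$. The only difference is cosmetic---you spell out explicitly the verification that $S$ is a slice and the induction on the number of period vectors, whereas the paper compresses this into the remark that each $\bb_i+\langle P_i\rangle$ is contained in $\Sop^\omega(\{\bb_i\}\cup(\bb_i+P_i))$.
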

This was also shown by Eilenberg and Sch\"{u}tzenberger~\cite[Corollary
12.3]{eilenberg1969rational} and since it follows easily from
\cref{slices-semilinear}, we include the short proof. 
\begin{proof}
	Since $S:=\bigcup_{i\ge 0} S_i$ is a slice and thus semilinear, we can
	write $S=\bb_1+\langle P_1\rangle~\cup~\cdots~\cup~\bb_n+\langle
	P_n\rangle$ for some semilinear representation
	$(\bb_1,P_1;\cdots;\bb_n,P_n)$. Consider the finite set
	$F=\{\bb_1\}\cup \bb_1+P_1~\cup~\cdots~\cup~\{\bb_n\}\cup\bb_n+P_n$.
	Since the entire set $\bb_i+\langle P_i\rangle$ can be built from
	$\{\bb_i\}\cup(\bb_i+P_i)$ using slice operations, we know that in
	particular, all of $S$ can be obtained from $F$ using slice operations,
	thus $S\subseteq\Sop^\omega(F)$. Moreover, as a finite set, $F$ must be
	included in $S_m$ for some $m$. Thus, we have
	$S\subseteq\Sop^\omega(F)\subseteq S_m$ since $S_m$ is a slice.
\end{proof}

With these basic facts in hand, we are ready to derive \cref{affine-hull} from \cref{slice-closures}.
\begin{proof}[Proof of \cref{affine-hull}]
	We begin by observing, as in
\cref{example:slices}\labelcref{example-solution-sets}, that $A\cap\N^k$ is a slice for every affine
space $A\subseteq\Q^k$.

	Now we claim that for every set $U\subseteq\N^k$, we have
	$\aff(U)=\aff(\Sop^\omega(U))$. The inclusion
	$\aff(U)\subseteq\aff(\Sop^\omega(U))$ is trivial. For the other
	inclusion, we use that $\aff(U)$ is the intersection of all affine spaces
	in $\Q^k$ that include $U$, together with the following observation: For every affine space
	$A\subseteq\Q^k$ with $U\subseteq A$, we also have
	$\Sop^\omega(U)\subseteq A$, because $A\cap\N^k$ is a slice, and $A\cap\N^k$ must contain $\Sop^\omega(U)$ by the minimality of $\Sop^\omega(U)$ as slice. Hence,
	$\aff(U)=\aff(\Sop^\omega(U))$.

	By \cref{slice-closures}, we can compute a semilinear representation
	for $\Sop^\omega(\Parikh{L})$. Since the affine hull can be computed
	for affine programs (this was implicitly shown by
	Karr~\cite{DBLP:journals/acta/Karr76} and explicitly and simplified by
	M\"{u}ller-Olm and Seidl~\cite[Theorem
	1]{DBLP:conf/icalp/Muller-OlmS04}), it follows that one can
	compute the affine hull of a given semilinear set.
\end{proof}

\mysubsection{Indexed languages}
Let us define indexed grammars. The following definition is a slight
variation
of the one
from~\cite{HopcroftUllman1979}.  An \emph{indexed grammar} is a tuple
$G=(N,T,I,P,S)$, where $N$, $T$, and $I$ are pairwise disjoint alphabets,
called the \emph{nonterminals}, \emph{terminals}, and \emph{index symbols},
respectively.  $S$ is the starting non-terminal symbol of the grammar and $P$ is the finite set of \emph{productions} of the forms 
\begin{align}
	&\text{(i) $A\to Bf$},		&\text{(ii) $Af\to B$},		& &\text{(iii) $A\to w$}, \label{production-forms}
\end{align}
where $A,B\in N$, $f\in I$, and $w\in N^*\cup T^*$.
We regard a word $Af_1\cdots f_n$ with $f_1,\ldots,f_n\in I$ as a nonterminal
to which a stack is attached. Here, $f_1$ is the topmost symbol and $f_n$ is on
the bottom.  For $w\in (N\cup T)^*$ and $x\in I^*$, we denote by $[w,x]$ the
word obtained by replacing each $A\in N$ in $w$ by $Ax$. A word in $(NI^*\cup
T)^*$ is called a \emph{sentential form}. For $q,r\in (NI^*\cup T)^*$, we write
$q\grammarstep[G] r$ if there are words $q_1,q_2\in (NI^*\cup T)^*$, $A\in N$,
$p\in (N\cup T)^*$ and $x,y\in I^*$ such that $q=q_1 Ax q_2$, $r=q_1[p,y]q_2$,
and one of the following holds: 
\begin{enumerate}[label=(\roman*)]
	\item $A\to Bf$ is in $P$, $p=B$ and $y=fx$, or
	\item $Af\to B$ is in $P$, $p=B$ and $x=fy$, or 
	\item $A\to p$ is in $P$, $p\in N^*\cup T^*$, and $y=x$.
\end{enumerate}
When the grammar is clear from the context, we also write $\grammarstep$ instead of $\grammarstep[G]$.
The language
\emph{generated by $G$} is 
\[ \Lang{G}=\{ w\in T^* \mid S\grammarsteps w\}, \]
where $\grammarsteps$ denotes the reflexive transitive closure of
$\grammarstep$. \emph{Derivation trees} are always unranked trees with
labels in $NI^*\cup T\cup \{\varepsilon\}$ and a very straightforward analog to
those of context-free grammars (see, for example, \cite{Smith2014}).  If $t$ is
a labeled tree, then its \emph{yield}, denoted $\yield{t}$, is the word spelled
by the labels of its leaves.
Since the definition of derivation trees is somewhat cumbersome
but a very straightforward analog to those of context-free grammars, we refer the reader
to \cite{HopcroftUllman1979} for details.

The productions of the forms (i), (ii), and (iii) from \cref{production-forms} will be called \emph{push},
\emph{pop}, and \emph{context-free} productions, respectively. Moreover, a context-free production $A\to w$ with $w\in T^*$ is also called a \emph{terminal} production.

\begin{example}\label{examplegrammar}
	Consider the indexed grammar $G=(N,T,I,P,S)$ with $N=\{S,R,U,V,W,A,B\}$, $I=\{f,g,\bot\}$, and $T=\{a,b\}$, where $P$ consists of the productions
	\begin{align*}
		&S\to R\bot, & &R\to Rf, & &R\to Rg,   & &R\to UU, \\
		&Uf\to V,    & &V\to UA, & &Ug\to W,   & &W\to UB, \\
		&A\to a,     & &B\to b,  & &U\bot\to E, & &E\to\varepsilon.
	\end{align*}
	The productions in the first row allow us to derive $S\grammarsteps Ux\bot Ux\bot$ for every index word $x\in\{f,g\}$. Those in the second row can then derive $Ufx\bot\grammarsteps Ux\bot Ax\bot$ and $Ugx\bot\grammarsteps Ux\bot Bx\bot$. Using the third row production, we can then produce terminal words: $A$'s and $B$'s become $a$'s and $b$'s, whereas $U\bot$ can be turned into $\varepsilon$. Note that the role of $\bot$ is to prevent $Ux\bot$ from being rewritten to $\varepsilon$ before consuming all of $x$.

	In particular, for every $x\in\{f,g\}^*$, we can derive $Ux\bot \grammarsteps U\bot w$, where $w$ is the word obtained from $x$ by replacing $f$ with $a$ and $g$ with $b$. Thus, we can derive $S\grammarsteps Ux\bot Ux\bot\grammarsteps ww$ for any such $x\in\{f,g\}^*$ and $w\in\{a,b\}^*$. The generated language is 
	\[ \Lang{G}=\{ww \mid w\in\{a,b\}^*\}. \]
	An example derivation tree for $w=abab$ is shown in \cref{derivationtree}.
\end{example}

\begin{figure}
\begin{center}
\begin{tikzpicture}
\tikzset{level distance=25pt}
\Tree [ .{$S$}
        [ .{$R\bot$}
          [ .{$Rf\bot$}
            [ .{$Rgf\bot$}
              [ .{$Ugf\bot$}
	        [ .{$Wf\bot$}
                  [ .{$Uf\bot$}
                    [ .{$V\bot$}
                      [ .{$U\bot$}
                        [ .{$E$} 
			  [ .{$\varepsilon$} ] ] ]
                      [ .{$A\bot$} 
			[ .{$a$} ] ] ] ]
                  [ .{$Bf\bot$}
		    [ .{$b$} ] ] ] 
		    ]
              [ .{$Ugf\bot$}
	        [ .{$Wf\bot$}
                  [ .{$Uf\bot$}
                    [ .{$V\bot$}
                      [ .{$U\bot$}
                        [ .{$E$} 
			  [ .{$\varepsilon$} ] ] ]
                      [ .{$A\bot$} 
			[ .{$a$} ] ] ] ]
                  [ .{$Bf\bot$}
		    [ .{$b$} ] ] ] 
		  ] ] ] ] ]
\end{tikzpicture}
\end{center}
\caption{Derivation tree for the grammar in \Cref{examplegrammar} with yield $abab$.}\label{derivationtree}
\end{figure}
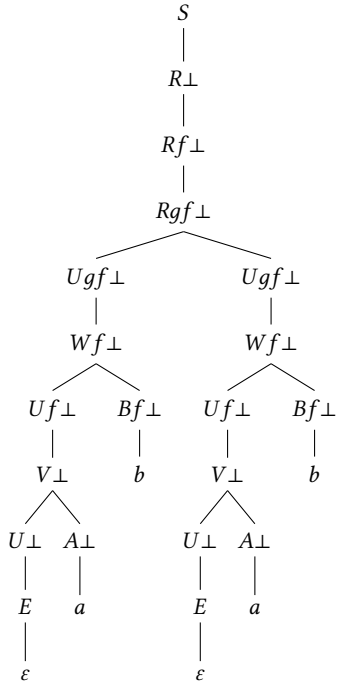

\section{Slice closures of indexed languages}\label{sec:slice_closure}

\newcommand{\tagPlug}{$\Dop$-1}
\newcommand{\tagStack}{$\Dop$-2}
\newcommand{\iniTuple}{\mathcal{I}}
In this section, we prove \cref{slice-closures}. To this end, fix an indexed grammar $G=(N,T,I,P,S)$.

A key ingredient in our proof is a fixpoint characterization of the Parikh image of indexed languages.
We describe it in terms of tuples of vector sets: One set for each non-terminal of the indexed grammar.
Formally, a \emph{set tuple} consists of a subset of $\N^{N\cup T}$ for each $A\in N$. In other words, a set tuple is an element of $\powerset{\N^{N\cup T}}^N$. 
For a set tuple $K\in \powerset{\N^{N\cup T}}^N$, we also write $K=(K_A)_{A\in N}$, where each $K_A\subseteq \N^{N\cup T}$. 

To the grammar $G$, we associate the set tuple $\Delta(G)\in\powerset{\N^{N\cup T}}^N$, which contains the Parikh images of all words over $N\cup T$ that can be derived from a given non-terminal. Thus, for each $A\in N$:
\[ \Delta(G)_A = \{\Parikh{w} \mid w\in (N\cup T)^*,~A\grammarsteps w \}. \]
\subsection{Indexed languages as fixpoints}
\newcommand{\grammarstepi}[1]{\grammarstep[\mathsf{pop}(#1)]}
\newcommand{\grammarstepcf}{\grammarstep[\mathsf{(iii)}]}
An \emph{operator} is a map that turns set tuples into set tuples, in other words a map $F:\powerset{\N^{N\cup T}}^N\to \powerset{\N^{N\cup T}}^N$ with $K\subseteq F(K)$ for each set tuple $K$. For such an operator $F$ and a set tuple $K$, we define
\begin{align} F^0(K)&=K & F^{i+1}(K)&=F(F^i(K)), &  F^\omega(K)&=\bigcup_{j\ge 0} F^j(K), \label{operator-powers}\end{align}
	for $i\ge 0$. Since $F^\omega(K)$ is a fixpoint of $F$, we will refer to descriptions via $F^\omega(K)$ for some $F$ and $K$ as \emph{fixpoint characterizations}.

\subsubsection*{Decomposing derivations in indexed grammars}

We now define an operator $\Dop$ that will permit such a fixpoint characterization of $\Delta(G)$ (see \cref{fixpoint-indexed}). 

The idea is the following. Consider derivation trees in indexed
grammars that produce a sentential form in $(N\cup T)^*$. These can be
transformed into new derivation trees in two ways. First, we can ``plug
in'' one tree with root non-terminal $A$ into another tree where $A$
occurs at a leaf. We call this rule ``\tagPlug''. Second, we
can take a derivation tree $t$ rooted in $B$ and extend it both at the
top and the leaves: At the top, we add a production $A\to Bf$ that
pushes $f$, and insert $f$ at the stack bottom of every node in $t$.
Then, at every leaf labeled with a non-terminal $C$, we apply a
pop production $Cf\to D$ to get rid of all the new $f$'s. As long as all
the non-terminal leaves of $t$ allow such pop productions, this yields
a valid new derivation tree. We call this rule
``\tagStack''.

By induction on the stack height, one can see that any
derivation tree with leaves in $N\cup T\cup\{\varepsilon\}$ can be
built in this way from atomic trees that consist of a single
production or a single node. 

\begin{example}\label{examplecomposition}
	\Cref{composingtrees} shows some steps for building the derivation tree
	from \cref{derivationtree} (for the grammar $G$ in
	\cref{examplegrammar}) using ``\tagPlug'' and ``\tagStack''. The tree
	$t_1$ is a single production, and $t_2$ is obtained from $t_1$ using
	``\tagStack'' with the stack symbol $g$. Then, $t_3$ is obtained from
	$t_2$ by plugging in a derivation tree for $W\grammarstep
	UB\grammarstep Ub$ (which itself is built using ``\tagPlug'') into the
	two $W$-leaves of $t_2$.  Furthermore, $t_4$ is then constructed from $t_3$
	using ``\tagStack'' with the stack symbol $f$. 

	The tree in \cref{derivationtree} can then be obtained from $t_4$ by first plugging in trees for $V\grammarsteps Ua$ (i.e.~``\tagPlug'') and then using ``\tagStack'' with stack symbol $\bot$, and plugging in $E\grammarstep \varepsilon$ .
\end{example}

\newcommand{\treeA}[4]{
\begin{scope}[shift={#2},#1]
	\Tree [ .\node (#3) {$R$};
        [ .{$U$} ]
	[ .{$U$} ] ]
\end{scope}
}

\newcommand{\treeB}[4]{
	\begin{scope}[shift={#2},#1]
	\Tree [ .\node(#3){$R$};
	[ .\node(#4){$Rg$};
	  [ .{$Ug$}
	    [ .{$W$} ] ]
	  [ .{$Ug$}
	    [ .{$W$} ] ] ] ]
	\end{scope}
}

\newcommand{\treeC}[4]{
\begin{scope}[shift={#2},#1]
\Tree 
[ .\node (#3) {$R$};
  [ .{$Rg$}
    [ .{$Ug$}
      [ .{$W$} 
	[ .{$U$} ] 
	[ .{$B$} 
	  [ .{$b$} ] ] ] ]
    [ .{$Ug$}
      [ .{$W$} 
	[ .{$U$} ] 
	[ .{$B$} 
	  [ .{$b$} ] ] ] ] ] ]
\end{scope}
}

\newcommand{\treeD}[4]{
\begin{scope}[shift={#2},#1]
\Tree 
[ .\node (#3) {$R$};
  [ .\node (#4) {$Rf$};
    [ .{$Rgf$}
      [ .{$Ugf$}
        [ .{$Wf$} 
          [ .{$Uf$} 
	    [ . {$V$} ] 
	  ]
          [ .{$Bf$} 
	    [ .{$b$} ]
	  ] 
	] 
      ] 
      [ .{$Ugf$}
        [ .{$Wf$} 
          [ .{$Uf$} 
	    [ .{$V$} ] ]
          [ .{$Bf$} 
	    [ .{$b$} ] ] ] ]
    ]
  ] 
]
\end{scope}
}

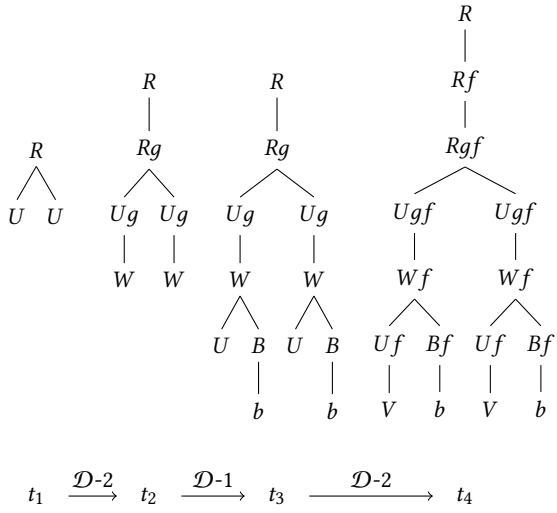
\begin{figure}
\begin{center}
\begin{tikzpicture}
\tikzset{level distance=25pt}

\coordinate (p) at (1.5cm,2cm);
\treeA{opacity=0}{(p)}{rootA}{refA}
\treeB{opacity=0}{(p)}{rootB}{refB}
\treeC{opacity=0}{(p)}{rootC}{refC}
\treeD{opacity=0}{(p)}{rootD}{refD}

\treeA{opacity=1}{(0,0)}{rootA2}{refA2}
	\treeB{opacity=1}{($($(rootA2.center)+(1.5cm,0cm)$)-($(refB.center)-(p)$)$)}{rootB2}{refB2}
	\treeC{opacity=1}{($($(rootB2.center)+(1.7cm,0cm)$)-($(rootB.center)-(p)$)$)}{rootC2}{refC2}
	\treeD{opacity=1}{($($(rootC2.center)+(2.5cm,0cm)$)-($(refD.center)-(p)$)$)}{rootD2}{refD2}

	\node (t1) at (rootA2.center |- ,-4.5) {$t_1$};
	\node (t2) at (rootB2.center |- ,-4.5) {$t_2$};
	\node (t3) at (rootC2.center |- ,-4.5) {$t_3$};
	\node (t4) at (rootD2.center |- ,-4.5) {$t_4$};

	\draw (t1) edge[->, shorten >=0.2cm, shorten <=0.2cm] node[above] {\tagStack} (t2);
	\draw (t2) edge[->, shorten >=0.2cm, shorten <=0.2cm] node[above] {\tagPlug} (t3);
	\draw (t3) edge[->, shorten >=0.2cm, shorten <=0.2cm] node[above] {\tagStack} (t4);
\end{tikzpicture}
\end{center}
	\caption{Example of building a derivation tree using transformations ``\tagPlug'' and ``\tagStack''. See \cref{examplecomposition} for details.}\label{composingtrees}
\end{figure}

	\subsubsection*{The operator $\Dop$}
	In order to describe the popping of all leaf non-terminals in parallel, we define the relation $\grammarstepi{f}$ for each index $f\in I$. For $\bu,\bv\in\N^{N\cup T}$, we write $\bu\grammarstepi{f}\bv$ if, informally, $\bv$ is obtained from $\bu$ by taking each non-terminal (occurrence) $A$ in $\bu$, picking a pop production $Af\to B$, and replacing this occurrence of $A$ by $B$. More formally, we have $\bu\grammarstepi{f} \bv$ if there are $\bw\in\N^T$ and productions $A_1f\to B_1,\ldots,A_nf\to B_n$ in $P$ such that
	\[ \bu=\bw+A_1+\cdots +A_n,\quad \bv=\bw+B_1+\cdots+B_n. \]
Recall that we identify each $a\in N\cup T$ with the vector $\bu_a\in\N^{N\cup T}$, where $\bu_a(a)=1$ and $\bu_a(b)=0$ for every $b\ne a$. Thus, $\bw+A_1+\cdots+A_n$ is the vector $\bw$, plus one occurrence of $A_i$ for each $i$ (and similar for $\bw+B_1+\cdots+B_n$).

With $\grammarstepi{f}$ in hand, we can define the operator $\Dop$:
\begin{align}
	\Dop(K)_A = \quad &K_A \notag \\
	       \cup ~&\{ \bu-B+\bv \mid \bu\in K_A,~\bu(B)\ge 1,~\bv\in K_B\} \label{derive-plug}\tag{\tagPlug}\\
	       \cup ~&\{ \bv\in\N^{N\cup T} \mid A\to Bf\in P,~\bu\in K_B,~\bu\grammarstepi{f}\bv \} \tag{\tagStack}\label{derive-stack}
\end{align}
Note that here, the two rules \eqref{derive-plug} and \eqref{derive-stack}
correspond exactly to the two transformations described above.

The final ingredient in our fixpoint characterization of $\Delta(G)$ is the starting value. The \emph{initial set tuple} is the set tuple $\iniTuple$ with 
\[	\iniTuple_A=\{A\}\cup \{w\in (N\cup T)^* \mid \text{there is a production $A\to w$ in $G$}\} \]
for each $A\in N$. We can now state our fixpoint characterization:
\begin{proposition}\label{fixpoint-indexed}
	$\Dop^\omega(\iniTuple)=\Delta(G)$.
\end{proposition}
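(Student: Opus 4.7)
The plan is to prove $\Dop^\omega(\iniTuple)_A = \Delta(G)_A$ for each $A \in N$ by two inclusions. For $\Dop^\omega(\iniTuple) \subseteq \Delta(G)$, I would induct on the least $j$ such that a vector lies in $\Dop^j(\iniTuple)_A$. The base case $\iniTuple_A \subseteq \Delta(G)_A$ is immediate: the vector $A$ witnesses the zero-step derivation $A \grammarsteps A$, and each context-free production $A \to w$ gives the one-step derivation $A \grammarstep w$. For the inductive step, I would check that each of the two rules defining $\Dop$ preserves membership in $\Delta(G)$. For \tagPlug, a derivation $A \grammarsteps u_1 B u_2$ can be extended via $B \grammarsteps v$ to $A \grammarsteps u_1 v u_2$, since $B$ in $u_1 B u_2$ has an empty stack. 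For \tagStack, starting from a production $A \to Bf$ and a derivation $B \grammarsteps u$ with $u \in (N \cup T)^*$, one first derives $A \grammarstep Bf \grammarsteps [u, f]$ by replaying the same steps with $f$ at the bottom of every stack, and then applies the pop productions witnessing $\Parikh{u} \grammarstepi{f} \bv$ at each nonterminal occurrence of $[u,f]$ to yield a sentential form in $(N \cup T)^*$ with Parikh image $\bv$.

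For the substantive direction $\Delta(G) \subseteq \Dop^\omega(\iniTuple)$, I would induct on the number of steps of a derivation $A \grammarsteps w$ with $w \in (N \cup T)^*$. The base case $w = A$ (zero steps) is immediate. For the inductive step, I case-split on the first production applied. A terminal context-free production $A \to w'$ with $w' \in T^*$ yields $w = w'$ and $\Parikh{w} \in \iniTuple_A$. For a context-free production $A \to C_1 \cdots C_n$ with $C_i \in N$, the word $w$ factors as $u_1 \cdots u_n$ with each $C_i \grammarsteps u_i$ in strictly fewer steps; the inductive hypothesis gives $\Parikh{u_i} \in \Dop^\omega(\iniTuple)_{C_i}$, and $n$ successive applications of \tagPlug starting from $C_1 + \cdots + C_n \in \iniTuple_A$ yield $\Parikh{w}$.

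The crux, and main obstacle, is the push case $A \to Bf$. The plan is to work with the derivation tree and decompose it along the ``lifetime'' of the original $f$ pushed by the first step. I would partition the nodes of the subtree rooted at $Bf$ into ``above'' nodes (those whose stack still ends in this original $f$) and ``below'' subtrees (rooted where the original $f$ has just been popped); these are separated by boundary nodes, each of which applies a pop production $C_i f \to D_i$ consuming the original $f$. Stripping the trailing $f$ from every stack in the above subtree should produce a valid derivation $B \grammarsteps v$ with $v \in (N \cup T)^*$ whose nonterminal leaves are exactly the $C_i$. Since this subderivation has strictly fewer steps than $A \grammarsteps w$, the inductive hypothesis gives $\Parikh{v} \in \Dop^\omega(\iniTuple)_B$; applying \tagStack with the boundary pops then yields $\Parikh{v'} \in \Dop^\omega(\iniTuple)_A$, where $v'$ is $v$ with each $C_i$ replaced by $D_i$. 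Finally, each below-subtree is a derivation $D_i \grammarsteps w_i$ of strictly smaller size, so the inductive hypothesis gives $\Parikh{w_i} \in \Dop^\omega(\iniTuple)_{D_i}$, and repeated uses of \tagPlug assemble $\Parikh{w}$. The main technical care goes into justifying the tree decomposition: verifying that stripping the trailing $f$ really produces a valid derivation from $B$, and that the collection of boundary pops taken together corresponds to exactly one step of $\grammarstepi{f}$. This essentially formalizes the informal ``\tagPlug''/``\tagStack'' view of derivation trees already sketched at the start of the section.
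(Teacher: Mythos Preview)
Your proposal is correct and follows essentially the same decomposition as the paper: for the inclusion $\Delta(G)\subseteq\Dop^\omega(\iniTuple)$, both arguments strip the bottom-of-stack symbol pushed by $A\to Bf$ to obtain a smaller derivation from $B$, apply \tagStack{} at the boundary where that symbol is popped, and then use \tagPlug{} to reattach the remaining subderivations. The only organizational difference is the induction parameter---you induct on the number of derivation steps and case-split on the first production, whereas the paper inducts on the maximal stack height and processes all height-zero pushes simultaneously before contracting; your version is arguably more direct, but the underlying idea is identical.
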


We prove \cref{fixpoint-indexed} separately in two lemmas. First, we show that every vector built in the described way is derivable.
\begin{lemma}
$\Dop^\omega(\iniTuple)\subseteq\Delta(G)$.
\end{lemma}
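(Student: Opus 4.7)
The plan is to strengthen the inclusion to $\Dop^j(\iniTuple)_A \subseteq \Delta(G)_A$ for every $A\in N$ and every $j\ge 0$, and prove this by induction on $j$; the conclusion follows since $\Dop^\omega(\iniTuple)=\bigcup_{j\ge 0}\Dop^j(\iniTuple)$. For the base case $j=0$ I would observe that $\iniTuple_A$ contains only the Parikh image $\Parikh{A}$, witnessed by the zero-step derivation $A\grammarsteps A$, together with Parikh images $\Parikh{w}$ of right-hand sides of productions $A\to w$ in $P$ with $w\in(N\cup T)^*$, each witnessed by $A\grammarstep w$.

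For the inductive step I would dispatch the three defining clauses of $\Dop(K)_A$ in turn. The clause $K_A$ itself is immediate from the induction hypothesis. For the \tagPlug\ clause, from $\bu=\bu'-B+\bv$ with $\bu'\in\Dop^j(\iniTuple)_A$, $\bu'(B)\ge 1$, and $\bv\in\Dop^j(\iniTuple)_B$, the IH supplies $w,w'\in(N\cup T)^*$ with $A\grammarsteps w$, $\Parikh{w}=\bu'$, $B\grammarsteps w'$, $\Parikh{w'}=\bv$. Since $\bu'(B)\ge 1$, some occurrence of the non-terminal $B$ with empty stack appears in $w$, and I would derive this occurrence further to $w'$, obtaining a word $w''$ with $A\grammarsteps w''$ and $\Parikh{w''}=\bu'-B+\bv=\bu$. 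For the \tagStack\ clause, from a push production $A\to Bf\in P$ and $\bu'\in\Dop^j(\iniTuple)_B$ with $\bu'\grammarstepi{f}\bu$, write $\bu'=\bw+A_1+\cdots+A_n$ and $\bu=\bw+B_1+\cdots+B_n$ with $A_if\to B_i\in P$. The IH provides $w\in(N\cup T)^*$ with $B\grammarsteps w$ and $\Parikh{w}=\bu'$, so the non-terminal occurrences in $w$ are exactly $A_1,\ldots,A_n$ in some order. I would then construct $A\grammarstep Bf\grammarsteps [w,f]\grammarsteps w''$, where the middle derivation is the lifted version of $B\grammarsteps w$, and the last segment applies each pop production $A_if\to B_i$ to a distinct occurrence of $A_if$ in $[w,f]$. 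The resulting $w''\in(N\cup T)^*$ has Parikh image exactly $\bu$.

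The main obstacle is the \emph{stack-lifting lemma}: if $B\grammarsteps w$ with $w\in(N\cup T)^*$, then $Bf\grammarsteps [w,f]$ for every $f\in I$. I would prove this by induction on the length of the derivation, maintaining the invariant that the sentential form in the lifted derivation is obtained from the original one by appending $f$ at the bottom of every non-terminal's stack. The key point is that every production---push, pop, or context-free---inspects only the top of the relevant stack, and so can be replayed identically in the lifted derivation; in particular, since no pop in the original derivation could access the empty bottom of any stack, the extra $f$ in the lifted derivation is never popped. The subtlety requiring the most care is the bookkeeping in the \tagStack\ case, namely the bijective correspondence between occurrences of $A_i$ in $w$ and occurrences of $A_if$ in $[w,f]$, which is needed to justify the final parallel application of the pop productions $A_if\to B_i$ and to pin down the resulting Parikh image.
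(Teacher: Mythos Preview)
Your proposal is correct and takes essentially the same approach as the paper. The paper phrases the argument as showing that $\Delta(G)$ is $\Dop$-closed (i.e.\ $K\subseteq\Delta(G)$ implies $\Dop(K)\subseteq\Delta(G)$) rather than inducting explicitly on $j$, and your stack-lifting lemma is exactly its one-line observation that $u\grammarsteps v$ implies $u^f\grammarsteps v^f$, where $u^f$ denotes the sentential form obtained by appending $f$ to the bottom of every stack.
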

\begin{proof}
	We have to argue that each vector produced using one of the rules
	\eqref{derive-plug} and \eqref{derive-stack} corresponds to a derived
	sentential form in $G$. Hence, we show that if
	$K\subseteq\Delta(G)$, then $\Dop(K)\subseteq\Delta(G)$.

	We will use the following notation. For a word $u\in (NI^*\cup T)^*$,
	let $u^f$ denote the word obtained from $u$ by adding $f$ at the bottom
	of each stack in $u$. In other words, if $u=u_0A_1x_1u_1\cdots
	A_nx_nu_n$, with $u_0,\ldots,u_n\in T^*$, $A_1,\ldots,A_n\in N$, and
	$x_1,\ldots,x_n\in I^*$, then $u^f:=u_0A_1x_1fu_1\cdots A_nx_nfu_n$.
	Observe that with this notation, $u\grammarsteps v$ implies
	$u^f\grammarsteps v^f$ for any $f\in I$.

	\begin{enumerate}
		\item If $\bu-B+\bv\in\N^{N\cup T}\in\Dop(K)_A$ is obtained
			using \eqref{derive-plug}, then we know that there are
			words $u,v\in (N\cup T)^*$ with $A\grammarsteps u$ and
			$B\grammarsteps v$ and $u$ contains an occurrence of
			$B$. Thus, $\bu-B+\bv$ is the Parikh image of a word
			derivable by replacing $B$ with $v$ in $u$.

		\item If $\bv\in\N^{N\cup T}$ is obtained using \eqref{derive-stack}, then there is a word $u$ with $B\grammarsteps u$ and $\bu=\Parikh{u}$ and $\bu\grammarstepi{f}\bv$. Then because of $\bu\grammarstepi{f}\bv$, there is a word $v\in (N\cup T)^*$ with $u^f\grammarsteps v$ and $\bv=\Parikh{v}$.
			This means, we have a derivation
			\[ A\grammarstep Bf \grammarsteps u^f \grammarsteps v, \]
			where $Bf\grammarsteps u^f$ holds because of our observation above.
			Thus, $\bv\in\Delta(G)_A$.\qedhere
	\end{enumerate}
\end{proof}

The second step is to show that every derivation can be decomposed into
transformations ``\tagPlug'' and ``\tagStack'':
\begin{lemma}
$\Dop^\omega(\iniTuple)\supseteq\Delta(G)$.
\end{lemma}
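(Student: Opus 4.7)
The plan is to induct on the size of a derivation tree $t$ witnessing $A \grammarsteps w$ with $w\in (N\cup T)^*$, showing that $\Parikh{w} \in \Dop^\omega(\iniTuple)_A$. Since $\Dop$ is monotone, $\Dop^\omega(\iniTuple)$ is a fixpoint of $\Dop$, so each application of clause \eqref{derive-plug} or \eqref{derive-stack} keeps us inside $\Dop^\omega(\iniTuple)$. Because the root of $t$ has empty stack, the production applied there is either context-free or a push; the pop case is vacuous, and these two possibilities correspond precisely to the two clauses of $\Dop$.

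For the base case, a single-node tree encodes the $0$-step derivation $A\grammarsteps A$, and $\Parikh{A}=A\in\iniTuple_A$. If the root applies a context-free rule $A\to p$ with $p\in T^*$, then $\Parikh{w}=\Parikh{p}\in\iniTuple_A$. If instead $p=B_1\cdots B_m\in N^*$, then each child $B_i$ roots a strictly smaller subtree deriving some $w_i$ with $w=w_1\cdots w_m$, so by induction $\Parikh{w_i}\in\Dop^\omega(\iniTuple)_{B_i}$; iterating \eqref{derive-plug} starting from $\Parikh{p}=B_1+\cdots+B_m\in\iniTuple_A$ replaces each $B_i$ with $\Parikh{w_i}$ to give $\Parikh{w}\in\Dop^\omega(\iniTuple)_A$.

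The main technical case is a push $A\to Bf$ at the root, with the subtree $t'$ rooted at $Bf$ deriving $w$. Track the specific $f$ lying at the bottom of the stack at the root of $t'$: it remains at the bottom of every descendant's stack until it is popped, and since every leaf of $t'$ has empty stack while the root's does not, each root-to-leaf path contains a unique \emph{popping node} whose stack equals $f$ and where some pop rule $C_if\to D_i$ fires. Let $t''$ be the subtree of $t'$ obtained by cutting off the children of these popping nodes, and build $\tilde{t}$ from $t''$ by stripping the bottom $f$ from every stack; each popping label $C_if$ becomes a leaf $C_i$, and the root becomes $B$. Since push, pop, and context-free rules act only on the top of the stack, $\tilde{t}$ is a legitimate derivation tree witnessing $B\grammarsteps u$ in $G$, where $u$ has terminal part $\bw\in\N^T$ and nonterminal occurrences exactly $C_1,\ldots,C_m$. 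Because $\tilde{t}$ is strictly smaller than $t$, induction yields $\Parikh{u}=\bw+C_1+\cdots+C_m\in\Dop^\omega(\iniTuple)_B$, and $\Parikh{u}\grammarstepi{f}\bw+D_1+\cdots+D_m$ via the pop productions used at the popping nodes, so clause \eqref{derive-stack} with $A\to Bf$ gives $\bw+D_1+\cdots+D_m\in\Dop^\omega(\iniTuple)_A$. Finally, the child subtree at each popping node (rooted at $D_i$ with empty stack) is a proper subtree of $t$, so by induction $\Parikh{w_i}\in\Dop^\omega(\iniTuple)_{D_i}$ for its yield $w_i$; iterating \eqref{derive-plug} replaces each $D_i$ by $\Parikh{w_i}$, and since $\Parikh{w}=\bw+\sum_i\Parikh{w_i}$ this completes the argument.

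The main obstacle is the decomposition in the push case: correctly isolating the bottom-$f$ region of $t'$, verifying that stripping the bottom $f$ produces a valid derivation tree for $B$ (which reduces to the local fact that indexed productions only inspect the top of the stack), and checking that every root-to-leaf path contains a popping node of the bottom $f$ (which follows from the stack-height mismatch between the root and the leaves of $t'$). Everything else is a routine bookkeeping of how \eqref{derive-plug} and \eqref{derive-stack} paste the pieces back together.
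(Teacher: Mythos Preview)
Your overall strategy is sound and differs from the paper's: you induct on the size of the derivation tree, whereas the paper inducts on the maximal stack height appearing in the derivation. Both hinge on the same core idea---at a push, track the bottom $f$, strip it to obtain a smaller derivation from $B$, and use that productions only inspect the top of the stack---but your structural recursion handles one push at a time (the one at the root), while the paper processes all height-zero push nodes in one pass before recursing on height. Your version is arguably more direct.

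There is, however, a gap in the push case. Your claim that ``each root-to-leaf path contains a unique popping node whose stack equals $f$ and where some pop rule $C_if\to D_i$ fires'' is false: the bottom $f$ can also vanish when a \emph{terminal} production $C\to p$ with $p\in T^*$ is applied, since in this grammar format such a production discards the entire stack (recall that $[p,x]=p$ when $p\in T^*$). So the stack-height mismatch between root and leaves of $t'$ does not force a pop. The paper handles this explicitly by a case split at the last node $\ell_\pi$ still carrying the $f$: if a pop is applied there, include $\ell_\pi$ itself; if a terminal production is applied, include its terminal children instead. Fortunately your construction already absorbs this once the false claim is dropped: on such a path no cutting happens, the terminal children survive as leaves of $t''$ and hence of $\tilde t$, and they are precisely what your $\bw$ records (so $\Parikh{w}=\bw+\sum_i\Parikh{w_i}$ remains correct). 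The fix is simply to replace the claim by the dichotomy ``the bottom $f$ is removed either by a pop when the stack is exactly $f$, or by a terminal production,'' and to verify that stripping $f$ still yields a valid derivation tree in the second case---which it does, since terminal productions ignore the stack.
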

\begin{proof}
	For the converse, we proceed by induction on the maximal stack height
	at the nonterminals of a derivation.
	More precisely, we show by induction on $h$ that if $A\grammarsteps u$
	for some word $u\in (N\cup T)^*$ via a derivation that uses only stacks
	of height $\le h$, then $\Parikh{u}$ belongs to
	$\Dop^\omega(\iniTuple)_A$. If the derivation involves only zero-height
	stacks, then $u$ can be derived from $A$ using only context-free
	productions and it is easy to see that $\Parikh{u}$ can be built using
	\eqref{derive-plug}.

	Consider a derivation $A\grammarsteps u$ with maximal stack height
	$h>0$.  Roughly speaking, we show that a derivation tree $t$ for this
	derivation can be transformed into a tree $t'$ such that 
	\begin{enumerate}
		\item $t'$ and $t$ have the same yield $u\in (N\cup T)^*$, 
		\item the nodes of $t'$ have labels in $N\cup T\cup\{\emptyWord\}$, 
		\item each inner node is labeled by some $B\in N$ and its
			children form a word $w\in (N\cup T)^*$ such that
			$\Parikh{w}\in \Dop^\omega(\iniTuple)_B$.
	\end{enumerate}
	If this is established, then clearly one can use the rule
	\eqref{derive-plug} to place $\Parikh{u}$ in $\Dop^\omega(\iniTuple)_A$.
	The tree $t'$ will be obtained using contractions, which we describe first.
	A \emph{subtree of $t$} consists of a node $x$ and a set $U$ of nodes
	such that
	\begin{enumerate}
	\item every node in $U$ lies on a path from $x$ to a leaf, and
	\item every path from $x$ to a leaf visits $U$ exactly once.
	\end{enumerate}
	The subtree is \emph{complete} if all nodes in $U$ are leaves.  By
	\emph{contracting} a subtree consisting of $x$ and $U$, we mean taking
	all the complete subtrees whose root is in $U$ and making attaching
	them directly under $x$. In other words, we remove all the nodes
	between $x$ and $U$.

	Let us now describe how to obtain $t'$. Since in $t$, there are nodes with stack height $h>0$, $t$ must apply push productions to nodes with stack height zero. For each of those nodes $x$, we will describe a subtree rooted at $x$. Contracting all these subtrees then yields the desired $t'$.

	Consider a stack-height-zero node $x$ to which a push production is applied, say $B\to Cf$ with $B,C\in N$,
	Since $t$ derives
	a word in $(N\cup T)^*$, the $Cf$ node cannot be a leaf (it has a
	non-empty stack). 
	Since every leaf in $t$ belongs to $N\cup T\cup\{\varepsilon\}$, every
	path $\pi$ from $Cf$ to a leaf, contains a last node $\ell_\pi$ that still
	contains this occurrence of $f$. We define a set $U$ of nodes: For each node $\ell_\pi$:
	\begin{enumerate}
		\item If the production applied to $\ell_\pi$ is a pop production, then we include $\ell_\pi$ itself in $U$.
		\item If the production applied to $\ell_\pi$ is a terminal production, then we include all children of $\ell_\pi$ in $U$.
	\end{enumerate}
	Note that these are the only possible productions, because
	(i)~$\ell_\pi$ is the last node on $\pi$ to contain $f$ and (ii)~any
	context-free production that is not a terminal production has a right-hand side
	in $N^+$ and thus passes on the entire stack to all child nodes (recall that
	context-free productions have right-hand sides in $N^*\cup T^*$).

	Let $w$ be the word consisting of all
	node labels of nodes in $U$. Then $w$ is of the form 
	\[ w=w_0D_1fw_1\cdots D_nfw_n \]
	for some $w_0,\ldots,w_n\in (N\cup T)^*$, and $D_1,\ldots,D_n\in N$.
	Moreover, it clearly satisfies $Cf\grammarsteps w$. By the choice of
	$U$, the $f$ occurrence in $Cf$ is either never removed during this
	derivation or it is removed by a terminal production. Therefore, we
	also have a derivation
	\[ C\grammarsteps w_0D_1w_1\cdots D_nw_n=:w^{-f} \]
	with maximal stack height strictly smaller than $h$. By induction,
	this implies $\Parikh{w^{-f}}\in \Dop^{\omega}(\iniTuple)_C$.

	By the choice of $U$, we know that for each $D_if$ in $w$, the
	production applied to $D_if$ is a pop production.  Therefore, we denote
	by $w'$ the word obtained from $w$ by replacing each $D_if$ by the
	resulting nonterminal $E_i$, hence $w'=w_0E_1w_1\cdots E_nw_n$. Then we
	have $\Parikh{w^{-f}}\grammarstepi{f} \Parikh{w'}$ and thus, by
	\eqref{derive-stack}, we obtain $w'\in\Dop^\omega(\iniTuple)_B$.

	Let $U'$ be the set of nodes spelling $w'$. Then clearly, $x$ and $U'$
	form a subtree. Moreover, since $w'\in \Dop^\omega(\iniTuple)_B$, contracting
	$(x,U')$ will yield a node compatible with the requirements for $t'$. Hence, we
	perform this contraction for every stack-height-zero node $x$ to which a push
	production is applied, and arrive at the tree $t'$ as desired.
\end{proof}

\subsection{Slices as fixpoints}
It will be helpful to view slice closures as fixpoints, by way of the 
operator $\Sop\colon \powerset{\N^{k}}\to \powerset{\N^k}$ that intuitively performs one slice operation. For $K\subseteq\N^k$, we define
\[ \Sop(K)=\{\bu+\bv+\bw \mid \bu,\bv,\bw\in\N^{k},~\bu\in K,~\bu+\bv\in K,~\bu+\bw\in K \} \]
Note that by picking $\bv=\bw=\bzero$, we have $K\subseteq \Sop(K)$. Moreover, raising this operator to the power of $\omega$ as in \cref{operator-powers}, indeed yields that $\Sop^\omega(K)$ is the slice closure of $K\subseteq\N^k$.

To simplify notation, we apply much of the terminology about subsets of $\N^k$
to set tuples, usually component-wise. For set tuples
$K,L\in\powerset{\N^{N\cup T}}^N$, we write $K\subseteq L$ if $K_A\subseteq
L_A$ for each $A\in N$. A set tuple $K\in\powerset{\N^{N\cup T}}^N$ is called
\emph{slice} if $K_A\subseteq\N^{N\cup T}$ is a slice for each $A\in N$.
Moreover, if $K$ is a set tuple, we set $\Sop(K)_A=\Sop(K_A)$ (i.e.\ we apply
$\Sop$ component-wise).  Accordingly, for a set tuple $K\in\powerset{\N^{N\cup
T}}^N$, we call $\Sop^\omega(K)$ the \emph{slice closure} of $K$.
A set tuple $K$ is \emph{semilinear} if each set $K_A\subseteq\N^{N\cup T}$ is semilinear. It is \emph{effectively semilinear} if we can compute a semilinear representation (it will always be clear from the context how $K$ is given).
\subsection{Computing slice closures}

In our algorithm to compute the slice closure of indexed languages, we will proceed in two steps. We will first compute the slice closure of the tuple $\Delta(G)$, which contains all Parikh images of sentential forms of $G$. In the second step, we then compute the slice closure of $\Parikh{L}$ from the slice closure of $\Delta(G)$. Here, it is key that the slice closure of $\Parikh{L}=\Delta(G)_S\cap\N^T$ can be reconstructed from the slice closure of $\Delta(G)_S$. More specifically, we can compute it by way of $\Sop^{\omega}(\Parikh{L})=\Sop^\omega(\Delta(G)_S)\cap\N^T$, because of Lemma \ref{intersection-commutes}. We use the convention that for any set $U\subseteq \N^{N\cup T}$, the intersection $U\cap \N^T$ refers to the tuples where all the coordinates indexed by $N$ are zero.%
\begin{lemma}\label{intersection-commutes}
	For every set $U\subseteq\N^{N\cup T}$, we have $\Sop^\omega(U\cap \N^T)=\Sop^\omega(U)\cap \N^T$.
\end{lemma}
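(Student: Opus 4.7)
The plan is to prove the two inclusions separately, using a fixpoint view of the slice closure via the operator $\Sop$ introduced above.

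For the inclusion $\Sop^\omega(U \cap \N^T) \subseteq \Sop^\omega(U) \cap \N^T$, I first observe that $\N^T$, viewed as the set $\{\bx \in \N^{N\cup T} \mid \bx(A)=0 \text{ for all } A\in N\}$, is a slice by \cref{example:slices}\labelcref{example-solution-sets}. Since $U\cap \N^T \subseteq U$, monotonicity of $\Sop^\omega$ gives $\Sop^\omega(U\cap\N^T)\subseteq \Sop^\omega(U)$. Since $\N^T$ is a slice containing $U\cap\N^T$, the minimality of the slice closure yields $\Sop^\omega(U\cap\N^T)\subseteq \N^T$, and these two together give the desired inclusion.

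For the reverse inclusion $\Sop^\omega(U) \cap \N^T \subseteq \Sop^\omega(U \cap \N^T)$, I would prove by induction on $j\ge 0$ that
\[ \Sop^j(U)\cap \N^T ~\subseteq~ \Sop^\omega(U\cap \N^T). \]
The base case $j=0$ is immediate. For the inductive step, suppose $\bx \in \Sop^{j+1}(U)\cap\N^T$. By the definition of $\Sop$, there exist $\bu,\bv,\bw\in\N^{N\cup T}$ with $\bu,\,\bu+\bv,\,\bu+\bw \in \Sop^j(U)$ and $\bx=\bu+\bv+\bw$. The crux is then the non-negativity observation: since $\bx\in\N^T$, we have $\bu(A)+\bv(A)+\bw(A)=0$ for each $A\in N$, and since all entries lie in $\N$, this forces $\bu(A)=\bv(A)=\bw(A)=0$ for each $A\in N$. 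Consequently $\bu,\,\bu+\bv,\,\bu+\bw\in\N^T$, and therefore they all lie in $\Sop^j(U)\cap\N^T$, which by the induction hypothesis is a subset of $\Sop^\omega(U\cap\N^T)$. Since $\Sop^\omega(U\cap\N^T)$ is a slice, applying the slice operation to these three vectors yields $\bx = \bu+\bv+\bw\in\Sop^\omega(U\cap\N^T)$.

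The only step that requires any thought is the decomposition argument in the inductive step, but it is essentially forced by non-negativity of the vectors in $\N^{N\cup T}$: as soon as the sum has zero in a coordinate, each summand does too. Taking the union over all $j$ completes the proof.
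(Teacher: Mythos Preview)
Your proof is correct and follows essentially the same approach as the paper: both directions rely on the non-negativity observation that if $\bu+\bv+\bw\in\N^T$ with $\bu,\bv,\bw\in\N^{N\cup T}$, then each of $\bu,\bu+\bv,\bu+\bw$ already lies in $\N^T$. The only cosmetic difference is that the paper phrases the inductive step as $\Sop^i(U)\cap\N^T\subseteq\Sop^i(U\cap\N^T)$ rather than going directly to $\Sop^\omega(U\cap\N^T)$, but the content is the same.
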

\begin{proof}
	The inclusion ``$\subseteq$'' holds because $\Sop^\omega(U\cap \N^T)$ is
	clearly included in $\Sop^\omega(U)$ and in $\N^T$.  

	For ``$\supseteq$'', we first claim that $\Sop(U)\cap
	\N^T\subseteq\Sop(U\cap \N^T)$.  Indeed, for each element $\bm{u}+\bm{v}+\bm{w}$
	added by $\Sop$ with $\bm{u}+\bm{v}+\bm{w}\in \N^T$, the vectors $\bm{u}$,
	$\bm{u}+\bm{v}$, and $\bm{u}+\bm{w}$ must clearly already belong to $\N^T$, hence $\bm{u}+\bm{v}+\bm{w}\in\Sop(U\cap \N^T)$.
	This implies $\Sop^i(U)\cap \N^T\subseteq\Sop^i(U\cap \N^T)$ by induction on
	$i$, and thus $\Sop^{\omega}(U)\cap \N^T\subseteq \Sop^\omega(U\cap \N^T)$.
\end{proof}

\mysubsection{Slice closure vs. other closure operators}
This is a good moment to highlight why the notion of the slice closure is so important in our proofs---even if one just wants to show \cref{affine-hull}: One might wonder whether \cref{affine-hull} could be obtained by somehow computing the affine hull of $\Delta(G)_S=\Dop^\omega(\iniTuple)_S$ and then use that to compute the affine hull of $\Parikh{L}$. However, we will now illustrate that the affine hull of $\Delta(G)_S$ is not enough, even for just computing the affine hull of $\Parikh{L}=\Delta(G)_S\cap\N^T$. 

We even show something slightly stronger: Even if we could compute the Zariski closure (which encodes more information than the affine hull) of $\Delta(G)_S$, this would not contain enough information to reconstruct the affine hull of $\Parikh{L}$. Let us first define the Zariski closure. A subset $V\subseteq\Q^k$ is an \emph{affine variety} if it is of the form
\[ \{\bv\in\Q^k \mid p_1(\bv)=\cdots=p_n(\bv)=0 \},\]
where $p_1,\ldots,p_n\in\Q[X_1,\ldots,X_k]$ are finitely many polynomials. It
is a simple consequence of Hilbert's Basis Theorem that the intersection of any
collection of affine varieties is again an affine variety. Therefore, we can
assign to every subset $U\subseteq\Q^k$ the smallest affine variety including
$U$, denoted $\overline{U}$. The closure $\overline{U}$ is also called the
\emph{Zariski closure} of $U$.

Suppose we have $N=\{A\}$ and $T=\{a,b\}$. Then we identify $\N^{N\cup T}$ with $\N^3$, where the left-most entry corresponds to $A$, the second to $a$, and the third to $b$. Consider the set 
\[ U=\{(x,y,z)\in\N^3 \mid x\ge 1\}. \]
We claim that then for every subset $W\subseteq\{0\}\times\N^2$, we have $\overline{U\cup W}=\overline{U}$. This is because every polynomial $p\in\Q[X_1,X_2,X_3]$ that vanishes on $U$ must also vanish on $W$: for any $y,z\in\N$, we know that $p(X_1,y,z)\in\Q[X_1]$ must have every positive natural number as a root and thus be identically zero; meaning $p$ also vanishes on $W$. However, this means $\overline{U\cup W}$ does not encode enough information to recover anything about $W$. \cref{intersection-commutes}, on the other hand, tells us that $\Sop^\omega(U\cup W)\cap (\{0\}\times\N^2)=\Sop^\omega(W)$, meaning $\Sop^\omega(W)$ can be reconstructed.

\begin{lemma}\label{derivation-step}
	Given a semilinear set $K\in\powerset{\N^{N\cup T}}^N$, the set tuple $\Dop(K)$ is effectively semilinear.
\end{lemma}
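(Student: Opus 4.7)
The plan is to use the effective equivalence between semilinear sets and Presburger-definable sets, and observe that each of the three components in the definition of $\Dop(K)_A$ can be written by a Presburger formula with parameters that range over the given semilinear sets. Since Presburger formulas can be effectively translated into semilinear representations and vice versa, this suffices.

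First I would dispatch the trivial part: the component $K_A$ itself is semilinear by hypothesis, and I am assuming a semilinear representation is given. Next, for the \eqref{derive-plug} component, for each fixed $B \in N$ I would write a Presburger formula $\varphi_B(\bx)$ asserting $\exists \bu, \bv.\ \bu \in K_A \wedge \bv \in K_B \wedge \bu(B) \geq 1 \wedge \bx = \bu - B + \bv$, using semilinear representations for $K_A$ and $K_B$ as Presburger formulas. Taking the finite union over $B \in N$ produces a semilinear representation of \eqref{derive-plug}.

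The main technical (but still straightforward) step is to handle \eqref{derive-stack}. Fix a production $A \to Bf \in P$. The relation $\bu \grammarstepi{f} \bv$ is Presburger-definable as follows: introduce, for each pop production $Cf \to D$ in $P$, a variable $x_{C,D}$ counting how often this production is used. Then $\bu \grammarstepi{f} \bv$ holds iff there exist nonnegative integers $\{x_{C,D}\}_{Cf\to D \in P}$ such that for every $C \in N$ we have $\bu(C) = \sum_{D:\ Cf\to D \in P} x_{C,D}$, for every $D \in N$ we have $\bv(D) = \sum_{C:\ Cf\to D \in P} x_{C,D}$, and for every $a \in T$ we have $\bu(a) = \bv(a)$. (If some $C$ admits no pop production with index $f$, the first constraint forces $\bu(C) = 0$, as intended by the requirement that $\bu - \bw$ decomposes as a sum of nonterminals $A_i$ for which $A_i f \to B_i$ is a production.) Conjoining with $\bu \in K_B$ yields a Presburger formula whose satisfying vectors $\bv$ form the set in \eqref{derive-stack} for the production $A\to Bf$, and again we take a finite union over all such productions.

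Finally, the overall $\Dop(K)_A$ is the union of these three effectively semilinear sets, which is again effectively semilinear; repeating this for each $A \in N$ yields a semilinear representation of the whole tuple $\Dop(K)$. The only real subtlety is the encoding of $\grammarstepi{f}$, but because the set of pop productions is finite and fixed by the grammar $G$, the counting variables $x_{C,D}$ are finitely many, and the constraints are linear, so there is no actual obstacle to a Presburger formulation.
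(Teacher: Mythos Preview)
Your proposal is correct and matches the paper's own proof essentially line for line: both reduce to Presburger definability, handle \eqref{derive-plug} by direct existential quantification over $\bu\in K_A$ and $\bv\in K_B$, and encode $\grammarstepi{f}$ by introducing one counting variable per pop production $Cf\to D$ with the same linear bookkeeping constraints on the nonterminal and terminal coordinates. The only cosmetic difference is that the paper enumerates the pop productions as $C_if\to D_i$ for $i=1,\ldots,n$ rather than indexing the variables by pairs $(C,D)$, but the content is identical.
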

\begin{proof}
	Suppose we are given a Presburger formula $\varphi_A$ for each set
	$K_A\subseteq\N^{N\cup T}$.  We provide a Presburger formula for each
	of the rules \eqref{derive-plug} and \eqref{derive-stack}. For \eqref{derive-plug},
	this is easy, because the set of all $\bu-B+\bv$ with $\varphi_A(\bu)$
	and $\varphi_B(\bv)$ and $\bu(B)\ge 1$ is clearly Presburger-definable.

	For the rule \eqref{derive-stack}, it suffices to show that the
	relation $\grammarstepi{f}$ is Presburger-definable for each $f\in I$.
	To this end, we introduce for each production $Cf\to D$ with $C,D\in N$ 
	a variable that counts to how many occurrences of $C$ in $\bu$ the production $Cf\to D$ is applied. 
	Let $C_if\to D_i$ for $i=1,\ldots,n$ be all the productions that pop $f$
	and introduce the variables $x_1,\ldots,x_n$.
	Moreover, we introduce a variable $\by$ ranging over $\N^T$ that holds all the terminal occurrences in $\bu$, which are unchanged between $\bu$ and $\bv$.
	Observe that then $\bu\grammarstepi{f} \bv$ if and only if
	\begin{multline*}
	\exists x_1,\ldots,x_n\in\N,~\by\in\N^T\colon\bigwedge_{E\in N} \bu(E)=\sum_{i\in\{1,\ldots,n\},~C_i=E} x_i ~\wedge~  \\
	\bigwedge_{E\in N} \bv(E)=\sum_{i\in\{1,\ldots,n\},~D_i=E} x_i ~\wedge~
		\bigwedge_{a\in T} \bv(a)=\bu(a), 
	\end{multline*}
	which is clearly expressible in Presburger arithmetic.
\end{proof}

The following \lcnamecref{slice-closure-semilinear} says that the slice closure
of semilinear sets is effectively semilinear. This follows using ideas of
Grabowski~\cite[corollary to Theorem 1]{DBLP:conf/fct/Grabowski81}. To make the
paper more self-contained, we include a short proof.
\begin{theorem}\label{slice-closure-semilinear}
	Given a semilinear $K\subseteq\N^k$, the set $\Sop^{\omega}(K)$ is effectively semilinear.
\end{theorem}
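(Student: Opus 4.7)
The plan is to compute $\Sop^\omega(K)$ by a dovetailed guess-and-verify over semilinear representations. Three ingredients make this effective. First, by \cref{slices-semilinear} the set $\Sop^\omega(K)$ admits \emph{some} semilinear representation, so it suffices to produce one. Second, a single application of the slice operator is Presburger-definable,
\[ \Sop(L)=\{\bx\in\N^k \mid \exists \bu,\bv,\bw\in\N^k\colon \bu,\bu+\bv,\bu+\bw\in L \wedge \bx=\bu+\bv+\bw\},\]
so starting from a Presburger formula for $K$, one obtains by induction a Presburger formula for $\Sop^N(K)$ for every $N\in\N$. Third, the predicates ``is a slice'', ``contains $K$'', and ``has all its generators in $\Sop^N(K)$'' are all decidable on a given semilinear set via decidability of Presburger arithmetic.

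The algorithm enumerates all pairs $(R,N)$, where $R=(\bc_1,Q_1;\ldots;\bc_m,Q_m)$ is a candidate semilinear representation and $N\in\N$, and for each pair tests: (a) $\langle R\rangle$ is a slice; (b) $K\subseteq\langle R\rangle$; and (c) the finite generator set $F_R=\bigcup_{s=1}^m(\{\bc_s\}\cup(\bc_s+Q_s))$ is contained in $\Sop^N(K)$. It outputs the first $R$ passing all three tests. For termination, pick any semilinear representation $R^*$ of $\Sop^\omega(K)$ (which exists by \cref{slices-semilinear}); its finite generator set satisfies $F_{R^*}\subseteq\Sop^\omega(K)=\bigcup_N\Sop^N(K)$, so some $N^*$ works, and the pair $(R^*,N^*)$ will eventually be examined.

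For correctness, conditions (a) and (b) force $\Sop^\omega(K)\subseteq\langle R\rangle$ by the minimality of the slice closure. For the reverse inclusion I will establish an auxiliary lemma: if $\bc$ and $\bc+\bp_1,\ldots,\bc+\bp_r$ all belong to a slice $S$, then so does the entire linear set $\bc+\langle\bp_1,\ldots,\bp_r\rangle$. This is a short induction on $\lambda_1+\cdots+\lambda_r$, applying the slice axiom with $\bu=\bc$, $\bu+\bv=\bc+\bp_i$ for some active $i$, and $\bu+\bw$ the element built inductively with $\lambda_i$ decreased by one. Applied to $S=\Sop^\omega(K)$, which contains $F_R$ by (c), this yields $\langle R\rangle\subseteq\Sop^\omega(K)$ and hence equality. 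The only delicate point in the proof is carrying out the Presburger translation of $\Sop$ so that conditions (a)--(c) are verifiably decidable; the rest is bookkeeping around the auxiliary claim and the enumeration.
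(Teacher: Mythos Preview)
Your proposal is correct and follows essentially the same approach as the paper: enumerate pairs $(R,N)$ of a semilinear representation and a bound, and accept when $\langle R\rangle$ is a slice containing $K$ whose generators $\{\bc_s\}\cup(\bc_s+Q_s)$ all lie in $\Sop^N(K)$, with termination guaranteed by semilinearity of $\Sop^\omega(K)$. The only difference is cosmetic: you spell out the auxiliary lemma that $\bc,\bc+\bp_1,\ldots,\bc+\bp_r\in S$ implies $\bc+\langle\bp_1,\ldots,\bp_r\rangle\subseteq S$, whereas the paper states this in one line without the explicit induction.
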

\begin{proof}
	Roughly speaking, we enumerate semilinear representations and check whether they satisfy a particular property. Once a semilinear representation satisfies the property, we know that it represents $\Sop^\omega(K)$.
	More precisely, the algorithm enumerates pairs $(R,m)$, where $R=(\bb_1,P_1;\ldots;\bb_n,P_n)$ is a semilinear representation and $m\in\N$. We call the pair $(R,m)$ \emph{good for $K$} if
	\begin{enumerate}
		\item\label{semilinear-slice-large} $K\subseteq \langle R\rangle$,
		\item\label{semilinear-slice-slice} $\langle R\rangle$ is a slice, and
		\item\label{semilinear-slice-small} each vector $\bb_i$ and each finite set $\bb_i+P_i$ is included in $\Sop^m(K)$.
	\end{enumerate}
	If our algorithm encounters a pair $(R,m)$ that is good for $K$, it
	returns $R$. Note that it is decidable whether a given pair is good for
	$K$, because given a Presburger formula for $K$ and a pair $(R,m)$, the
	fact that $(R,m)$ is good for $K$ can be expressed in Presburger
	arithmetic. Here, we use the fact that for a given $m$, we can
	construct a Presburger formula for the set $\Sop^m(K)$.

	The algorithm is sound, i.e.\ if $(R,m)$ is good for $K$, then $\langle
	R\rangle=\Sop^\omega(K)$:
	\cref{semilinear-slice-large,semilinear-slice-slice} imply that
	$\langle R\rangle$ is a slice that includes $K$. Moreover,
	\cref{semilinear-slice-small} implies $\langle R\rangle\subseteq\Sop^\omega(K)$,
	because the entire set $\bb_i+\langle P_i\rangle$ can be obtained from
	$\bb_i$ and $\bb_i+P_i$ by repeatedly applying the slice operation.

	It remains to show that the algorithm terminates. This follows from the
	fact that $\Sop^\omega(K)$ is semilinear: Take a semilinear
	representation $R=(\bb_1,P_1;\ldots;\bb_n,P_n)$ for $\Sop^\omega(K)$.
	Then clearly $K\subseteq\langle R\rangle$ and $\langle R\rangle$ is a
	slice.  Moreover, since $\{\bb_i\}\cup(\bb_i+P_i)$ is a finite subset of
	$\Sop^\omega(K)=\bigcup_{i\ge 0}\Sop^i(K)$, there must be an $m$ such
	that $\Sop^m(K)$ contains this finite set. Hence, the pair $(R,m)$ is
	good for $K$.
\end{proof}

\begin{proposition}\label{reorder}
	For every set tuple $K$, we have $\Sop^{\omega}(\Dop^{\omega}(K))=(\Sop^{\omega}\circ\Dop)^\omega(K)$.
\end{proposition}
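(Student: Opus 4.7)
The plan is to prove both inclusions separately. The inclusion $\Sop^\omega(\Dop^\omega(K)) \subseteq (\Sop^\omega \circ \Dop)^\omega(K)$ is the easier direction. By induction on $i$, using that $\Sop^\omega$ is expansive, one obtains $\Dop^i(K) \subseteq (\Sop^\omega \circ \Dop)^i(K)$, hence $\Dop^\omega(K) \subseteq (\Sop^\omega \circ \Dop)^\omega(K)$. The right-hand side is itself a slice---being an ascending union of the slices $(\Sop^\omega \circ \Dop)^i(K)$ for $i \geq 1$, each of which is of the form $\Sop^\omega(\cdots)$---so applying $\Sop^\omega$ to both sides and invoking its idempotency gives the claim.

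For the reverse inclusion, set $A := \Sop^\omega(\Dop^\omega(K))$. A routine induction on $i$ shows $(\Sop^\omega \circ \Dop)^i(K) \subseteq A$ provided $A$ is closed under $\Sop^\omega \circ \Dop$. Since $A$ is already a slice, only $\Dop$-closure is nontrivial. I plan to derive it from the commutation-style lemma
\[
\Dop(\Sop^\omega(L)) \subseteq \Sop^\omega(\Dop(L)) \quad \text{for every set tuple } L,
\]
which, applied to $L = \Dop^\omega(K)$ (a $\Dop$-fixpoint), collapses to $\Dop(A) \subseteq A$. The lemma itself reduces to its single-step version $\Dop(\Sop(L)) \subseteq \Sop^\omega(\Dop(L))$ by induction on the number of slice steps, using $\Sop^\omega \circ \Sop^\omega = \Sop^\omega$ and the fact that $\Dop$ commutes with ascending unions (each of its defining rules reads only finitely many inputs).

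The single-step lemma is proved by case analysis on the rule producing $\bz \in \Dop(\Sop(L))_A$. For the stack rule \eqref{derive-stack}: given a decomposition $\bu = \bu_1 + \bu_2 + \bu_3 \in \Sop(L)_B$ with $\bu_1, \bu_1 + \bu_2, \bu_1 + \bu_3 \in L_B$ and a pop-derivation $\bu \grammarstepi{f} \bz$ witnessed by a per-occurrence choice of pop productions, I would distribute these choices among $\bu_1, \bu_2, \bu_3$ according to which summand each nonterminal occurrence originates from. This lifts to parallel pop-derivations on $\bu_1, \bu_1 + \bu_2, \bu_1 + \bu_3$ yielding elements $\bz_1, \bz_1 + \bz_2, \bz_1 + \bz_3 \in \Dop(L)_A$ whose slice is exactly $\bz$.

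The main obstacle is the plug rule \eqref{derive-plug}, which produces $\bz = \bu - B + \bv$ under the hypothesis $\bu(B) \geq 1$. The naive attempt---plug $\bu_1, \bu_1 + \bu_2, \bu_1 + \bu_3$ with $\bv_1, \bv_1 + \bv_2, \bv_1 + \bv_3$ and slice---succeeds when $\bu_1(B) \geq 1$, but fails if the guaranteed $B$ sits only in $\bu_2$ or $\bu_3$. I plan to handle this subcase by \emph{two} levels of slice: assuming $\bu_2(B) \geq 1$, first plug $\bu_1 + \bu_2$ with the three $\bv$-elements and slice to obtain $(\bu_1 + \bu_2) - B + \bv \in \Sop(\Dop(L))_A$; then slice the triple $\bigl(\bu_1,\; (\bu_1 + \bu_2) - B + \bv,\; \bu_1 + \bu_3\bigr)$---each member already in $\Sop(\Dop(L))_A$, with nonnegative pairwise increments $\bu_2 - B + \bv$ and $\bu_3$ summing to $\bz - \bu_1$---to obtain $\bz \in \Sop^2(\Dop(L))_A$. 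This second slice step is precisely why the conclusion must be $\Sop^\omega$ rather than $\Sop$, and is the most delicate point of the whole proof.
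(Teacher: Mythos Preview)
Your proposal is correct and follows essentially the same route as the paper: both directions rest on the commutation lemma $\Dop(\Sop(L))\subseteq\Sop^2(\Dop(L))$ (the paper's \cref{commute}), proved by case analysis on the $\Dop$-rule, with the plug rule requiring two slice levels when the occurrence of $B$ lies outside the base vector. Your treatment of the case $\bu_1(B)\ge 1$ is slightly more direct than the paper's (one slice step rather than two), and your framing of the easy inclusion via ``the right-hand side is a slice'' differs cosmetically from the paper's explicit interleaving bound, but the substance is the same.
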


Before we prove \cref{reorder}, let us see how it is used in the proof of \cref{slice-closures}.
\begin{proof}[Proof of \cref{slice-closures}]
	Our algorithm computes a sequence 
	\begin{equation} K^{(0)}\subseteq K^{(1)}\subseteq K^{(2)}\subseteq \cdots \label{set-tuple-sequence}\end{equation}
	of semilinear set tuples with $K^{(0)}=\iniTuple$ and
	$K^{(i+1)}=\Sop^{\omega}(\Dop(K^{(i)}))$ for $i\ge 0$. Note that due to
	\cref{derivation-step,slice-closure-semilinear}, we can compute a
	Presburger formula for $K^{(i+1)}$ from a 
	Presburger formula for $K^{(i)}$. The algorithm continues while $K^{(i)}\subsetneq K^{(i+1)}$ and terminates once we have $K^{(i+1)}=K^{(i)}$.

	Notice that upon termination at $K^{(i)}$, we have
	$K^{(i)}=(\Sop^\omega\circ\Dop)^\omega(\iniTuple)$, which equals
	$\Sop^\omega(\Dop^\omega(\iniTuple))$ by \cref{reorder}. Therefore, by
	\cref{intersection-commutes}, the slice closure of the indexed language is
	$K^{(i)}_S\cap \N^T$, for which we can clearly compute a Presburger formula.
	Thus, it remains to show that the algorithm terminates.

	Since ascending chains of slices become stationary, we know that for
each non-terminal $A$, the sequence $K_A^{(0)}\subseteq K_A^{(1)}\subseteq
K_A^{(2)}\subseteq\cdots$ becomes stationary at some index $m_A$. Beyond all
these finitely many indices, the sequence $K^{(0)}\subseteq K^{(1)}\subseteq
K^{(2)}\subseteq\cdots$ of set tuples must be stationary in all components.
\end{proof}

The remainder of this section is now devoted to proving \cref{reorder}.
The inclusion ``$\subseteq$'' is obvious, because we clearly have
$\Sop^\omega(\Dop^\omega(K))=\bigcup_{i\ge 0} \Sop^i(\Dop^i(K))$ and moreover $\Sop^i(\Dop^i(K))\subseteq (\Sop\circ\Dop)^{2i}(K)\subseteq (\Sop^\omega\circ\Dop)^{2i}(K)$. The converse inclusion ``$\supseteq$'' will easily follow from:
\begin{lemma}\label{commute}
	For every set tuple $K$, we have $\Dop(\Sop(K))\subseteq \Sop^2(\Dop(K))$.
\end{lemma}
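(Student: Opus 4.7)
\textbf{Proof plan for \cref{commute}.} I take an arbitrary $\bz \in \Dop(\Sop(K))_A$ and distinguish the three ways it can appear: (i)~$\bz$ is already in $\Sop(K)_A$, (ii)~$\bz$ is produced by the plug rule \eqref{derive-plug}, or (iii)~$\bz$ is produced by the stack rule \eqref{derive-stack}. Case (i) is immediate: since $K\subseteq\Dop(K)$ we get $\Sop(K)\subseteq\Sop(\Dop(K))\subseteq\Sop^2(\Dop(K))$.

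For case (iii), $\bz=\bv$ arises from some $\bu\in\Sop(K)_B$ with $A\to Bf\in P$ and $\bu\grammarstepi{f}\bv$. I fix slice witnesses $\bu^{(1)}\leq\bu^{(2)},\bu^{(3)}$ in $K_B$ with $\bu^{(2)}+\bu^{(3)}-\bu^{(1)}=\bu$. The step $\bu\grammarstepi{f}\bv$ assigns a pop production to each non-terminal occurrence in $\bu$. Because the non-terminal multiset of $\bu^{(1)}$ is included in that of $\bu$, I can split these assignments consistently into three parts covering $\bu^{(1)}$, $\bu^{(2)}-\bu^{(1)}$, and $\bu^{(3)}-\bu^{(1)}$. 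This produces $\bv^{(i)}$ with $\bu^{(i)}\grammarstepi{f}\bv^{(i)}$, hence $\bv^{(i)}\in\Dop(K)_A$, satisfying $\bv^{(1)}\leq\bv^{(2)},\bv^{(3)}$ and $\bv^{(2)}+\bv^{(3)}-\bv^{(1)}=\bv$. A single slice step then places $\bv$ in $\Sop(\Dop(K))_A\subseteq\Sop^2(\Dop(K))_A$.

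For case (ii), $\bz=\bu-B+\bv$ with $\bu\in\Sop(K)_A$, $\bu(B)\geq 1$, and $\bv\in\Sop(K)_B$. Fix slice witnesses $\bu^{(i)}$ for $\bu$ and $\bv^{(j)}$ for $\bv$. When $\bu^{(1)}(B)\geq 1$, all three $\bu^{(i)}$'s carry a $B$. For each fixed $i$, the three elements $\bu^{(i)}-B+\bv^{(j)}\in\Dop(K)_A$ have a slice structure whose application yields $\bu^{(i)}-B+\bv\in\Sop(\Dop(K))_A$. The three resulting elements (one per $i$) then themselves form a slice triple with base $\bu^{(1)}-B+\bv$, and a second application of $\Sop$ produces $\bu-B+\bv=\bz$, which lies in $\Sop^2(\Dop(K))_A$.

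The main obstacle is the subcase $\bu^{(1)}(B)=0$: then $\bu^{(1)}$ cannot serve as a plug base, so the previous construction breaks. I work around this by observing that $\bu(B)\geq 1$ forces one of $\bu^{(2)},\bu^{(3)}$ to carry a $B$, say $\bu^{(2)}$. I build only $\bu^{(2)}-B+\bv\in\Sop(\Dop(K))_A$ as in the easy subcase, and then take the slice of $\bu^{(1)},\bu^{(3)},\bu^{(2)}-B+\bv$ (all in $\Sop(\Dop(K))_A$) with base $\bu^{(1)}$. The inequality $\bu^{(1)}\leq\bu^{(2)}-B+\bv$ holds coordinatewise---the $B$-coordinate because $\bu^{(1)}(B)=0$ and $\bu^{(2)}(B)\geq 1$, and the other coordinates by $\bu^{(1)}\leq\bu^{(2)}$ combined with $\bv\geq\bm{0}$---so this slice step is valid and produces $\bu^{(3)}+(\bu^{(2)}-B+\bv)-\bu^{(1)}=\bu-B+\bv=\bz$, placing it in $\Sop^2(\Dop(K))_A$.
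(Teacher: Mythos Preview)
Your proof is correct and follows essentially the same approach as the paper's: both handle the stack rule by splitting the pop assignments along a decomposition $\bs,\bx,\by$ of $\bu$, and both handle the plug rule by first ``sliding in'' all of $\bv$ via one application of $\Sop$ and then combining across the $\bu$-witnesses via a second application. The only cosmetic difference is the case split for the plug rule---the paper distinguishes which of $\bs,\bx,\by$ carries the $B$, whereas you split on whether $\bu^{(1)}(B)\ge 1$ or $\bu^{(1)}(B)=0$---but the resulting arguments coincide (your hard subcase is exactly the paper's case $\bx(B)\ge 1$, since $\bu^{(1)}(B)=0$ and $\bu^{(2)}(B)\ge 1$ force $(\bu^{(2)}-\bu^{(1)})(B)\ge 1$).
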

\begin{proof}
	For any element of $\Dop(\Sop(K))_A$, we distinguish between the rule
	applied for inclusion in $\Dop(\Sop(K))$, i.e.\ \eqref{derive-plug} or
	\eqref{derive-stack}. First, suppose \eqref{derive-plug} was applied.
	Then our element of $\Dop(\Sop(K))_A$ is of the form $\bu-B+\bv$ with
	$\bu\in\Sop(K)_A$, $B\in N$, $\bu(B)\ge 1$, and $\bv\in\Sop(K)_B$. This
	means, we can write
	\begin{align*}
		\bu&=\bs+\bx+\by & &\text{for $\bs,\bs+\bx,\bs+\by\in K_A$}, \\
		\bv&=\bs'+\bx'+\by' & &\text{for $\bs',\bs'+\bx',\bs'+\by'\in K_B$},
	\end{align*}
	for some $\bs,\bx,\by,\bs',\bx',\by'\in\N^{N\cup T}$. Note that this is
	even true if $\bu\in K_A$, $\bv\in K_B$, because we allow
	$\bx,\by,\bx',\by'$ to be zero. 

	Since $\bu(B)\ge 1$, we know that $\bs(B)\ge 1$ or $\bx(B)\ge 1$, or
	$\by(B)\ge 1$. Here, the case $\by(B)\ge 1$ is symmetric to $\bx(B)\ge 1$, so suppose $\bs(B)\ge 1$ or $\bx(B)\ge 1$.
	In either case, we have $(\bs+\bx)(B)\ge 1$ and therefore
	\begin{align}
		&\bs+\bx-B+\bs'      & &\in \Dop(K)_A \label{commute-a}\\
		&\bs+\bx-B+\bs'+\bx' & &\in \Dop(K)_A \label{commute-b}\\
		&\bs+\bx-B+\bs'+\by' & &\in \Dop(K)_A \label{commute-c}
	\end{align}
	by \eqref{derive-plug}. Now \cref{commute-a,commute-b,commute-c} together yield
	\begin{align} \bs+\underbrace{\bx-B+\bs'+\bx'+\by'}_{=:\bz} \in \Sop(\Dop(K))_A. \label{commute-d}\end{align}
	Now we distinguish between the cases $\bs(B)\ge 1$ and $\bx(B)\ge 1$.
	\begin{enumerate}
		\item Suppose $\bx(B)\ge 1$. In this case, the vector $\bz$ in
			\cref{commute-d} belongs to $\N^{N\cup T}$ (because
			$\bx-B$ does). Therefore, combined with $\bs,\bs+\by\in
			K_A\subseteq\Sop(\Dop(K))_A$, we obtain
			\[ \bu-B+\bv=\bs+\bx+\by-B+\bs'+\bx'+\by' \in \Sop^2(\Dop(K))_A, \]
		as desired.
	\item Suppose $\bs(B)\ge 1$. Then we can apply the reasoning for \cref{commute-d}, but (a)~with $\bs$ instead of $\bs+\bx$ and again (b)~with $\bs+\by$ instead of $\bs+\bx$ to obtain
	\begin{align*} 
		&\bs-B+\bs'+\bx'+\by' &&\in \Sop(\Dop(K))_A,\\
		&\bs+\by-B+\bs'+\bx'+\by' &&\in \Sop(\Dop(K))_A,
	\end{align*}
	which together with \cref{commute-d} yields
	\[ \bu-B+\bv=\bs+\bx+\by-B+\bs'+\bx'+\by' \in \Sop^2(\Dop(K))_A, \]
	as desired.
	\end{enumerate}

	It remains to consider the rule \eqref{derive-stack}. Consider a
	production $A\to Bf$ and vectors $\bu\in \Sop(K)_B$ and $\bv\in\N^{N\cup T}$
	with $\bu\grammarstepi{f} \bv$. Then we can write $\bu=\bs+\bx+\by$ such that
	$\bs,\bs+\bx,\bs+\by\in K_B$. Observe that we can then decompose $\bv=\bs'+\bx'+\by'$ such that
	$\bs\grammarstepi{f}\bs'$, $\bx\grammarstepi{f}\bx'$, and $\by\grammarstepi{f}\by'$. This %
	implies:
	\begin{align*}
		&\bs'     &&\in \Dop(K)_A &&\text{since $\bs\grammarstepi{f}\bs'$} \\
		&\bs'+\bx'&&\in \Dop(K)_A &&\text{since $\bs+\bx\grammarstepi{f}\bs'+\bx'$} \\
		&\bs'+\by'&&\in \Dop(K)_A &&\text{since $\bs+\by\grammarstepi{f}\bs'+\by'$}
	\end{align*}
	and therefore $\bv=\bs'+\bx'+\by'\in\Sop(\Dop(K))_A\subseteq\Sop^2(\Dop(K))_A$.
\end{proof}
Now the inclusion ``$\supseteq$'' in \cref{reorder} follows easily: With $i$
applications of \cref{commute}, we see that
$\Dop(\Sop^i(K))\subseteq\Sop^{2i}(\Dop(K))$ for every $i\ge 0$ and thus
$\Dop(\Sop^\omega(K))\subseteq\Sop^{\omega}(\Dop(K)))$.  By induction on $j$, this
implies $(\Sop^\omega\circ\Dop)^j(K)\subseteq \Sop^\omega(\Dop^j(K))$ for every
$j\ge 0$.  Thus, we obtain
$(\Sop^\omega\circ\Dop)^\omega(K)\subseteq\Sop^\omega(\Dop^\omega(K))$.

\section{Word equations with counting constraints}\label{sec:wordeqs}
Let $\Omega$ be a finite collection of
variables and $A$ a finite set of letters, and consider  the word equation $(U,V)$ over $(A,\Omega)$.
Recall that a \emph{solution} to $(U,V)$  is a morphism
$\sigma\colon(\Omega \cup A)^*\to A^*$ that fixes $A$ point-wise and such that $\sigma(U)=\sigma(V)$.

A \emph{word equation with rational constraints} is a word equation $(U,V)$
together with a regular language $R_X\subseteq A^*$ for each variable
$X\in\Omega$. In this situation, we say that $\sigma\colon\Omega\to A^*$ is a
\emph{solution} if it is a solution to the word equation $(U,V)$ and also
satisfies $\sigma(X)\in R_X$ for each $X\in\Omega$.

\newcommand{\enc}[1]{\mathsf{enc}(#1)}
\newcommand{\cT}{\mathcal{T}}
\mysubsection{Representing solutions} We will rely on a result that provides a language-theoretic description of the set of solutions of an equation, which requires a representation as words. Given a map $\sigma\colon\Omega\to A^*$ with $\Omega=\{X_1,\ldots,X_k\}$, we define
\[ \enc{\sigma}~ = ~\sigma(X_1)\#\cdots\#\sigma(X_k), \]
where $\#$ is a fresh letter not in $A$. Hence, $\enc{\sigma}\in(A\cup\{\#\})^*$.

\mysubsection{Counting functions}
Counting constraints will be expressed using counting functions, which are functions
$f\colon (A\cup\{\#\})^*\to \Z^n$ that are rational when viewed as a relation $\subseteq (A\cup\{\#\})^*\times\Z^n$.
A \emph{counting transducer} is a finite automaton $\cT=(Q,\Sigma,E,q_0,F)$, where $Q$ is a finite set of \emph{states}, $\Sigma$ is its \emph{input alphabet}, $E\subseteq Q\times\Sigma^*\times\Z^n\times Q$ is its \emph{edge relation}, $q_0\in Q$ is its \emph{initial state}, and $F\subseteq Q$ is its set of \emph{final states}. A \emph{run} is a sequence
\[ q_0(w_1,\bx_1)q_1\cdots (w_m,\bx_m)q_m \]
with $q_0,\ldots,q_m\in Q$, $w_1,\ldots,w_m\in\Sigma^*$, and $\bx_1,\ldots,\bx_m\in\Z^n$ such that each $(q_i,w_{i+1},\bx_{i+1},q_{i+1})\in E$, and $q_m\in F$. The relation \emph{defined} by the counting transducer, denoted $R(\cT)$ is the set of all pairs $(w_1\cdots w_m,\bx_1+\cdots+\bx_m)\in \Sigma^*\times\Z^n$ for runs as above. The counting transducer is called \emph{functional} if for every word $w\in\Sigma^*$, there is at most one vector $\bx\in \Z^n$ such that $(w,\bx)\in R(\cT)$. A function $f\colon \Sigma^*\to\Z^n$ is called a \emph{counting function} if its graph $\{(w,f(w)) \mid w\in\Sigma^*\}\subseteq\Sigma^*\times\Z^n$ is defined by a functional counting transducer. In particular, the graph of $f$ is a rational subset of  $\Sigma^*\times\Z^n$

We will apply counting functions to encodings of solutions. Let us see some examples.
\begin{enumerate}
\item Letter counting: The function $f_{X,a}$ with $f_{X,a}(\enc{\sigma})=|\sigma(X)|_a$ for some letter $a\in A$ and variable $X\in\Omega$. Here,  the transducer increments the counter for each $a$ between the $i$-th and $(i+1)$-st occurrence of $\#$, where $X=X_i$.
\item Counting positions with MSO properties. Consider monadic second-order logic (MSO) formulas evaluated in the usual way over finite words. Suppose we have an MSO formula $\varphi(x)$ with one free first-order variable $x$. Then we can define the function $f_{\varphi}$ such that $f_{\varphi}(\enc{\sigma})$ is the number of positions $x$ in $\enc{\sigma}$ where $\varphi(x)$ is satisfied. Then $f_{\varphi}$ is a counting function, which follows from the fact that MSO formulas define regular languages. For example, we could count the number of $a$'s such that there is no $c$ between the $a$ and the closest even-distance $b$.

\item Linear combinations: If $f_1,\ldots,f_m\colon\Sigma^*\to\Z^n$ are counting functions, then so is $f$ with $f(w)=\lambda_1f_1(w)+\cdots+\lambda_mf_m(w)$ for some $\lambda_1,\ldots,\lambda_m\in\Z$. This can be shown with a simple product construction.
\item\label{length-function} Length function: The function $L_X$ with $L_X(\enc{\sigma})=|\sigma(X)|$ for some $X\in\Omega$. For this, we can take a linear combination of letter counting functions.
\end{enumerate}

The following is the problem of \emph{word equations with rational and counting constraints}:
\begin{description}
\item[Given] A word equation $(U,V)$ with rational constraints and a counting function $f\colon (A\cup\{\#\})^*\to\Z^n$
\item[Question] Is there a solution $\sigma$ that satisfies $f(\enc{\sigma})=0$?
\end{description}
As B\"{u}chi and Senger showed~\cite[Corollary 4]{buchi1990definability},
this problem in full generality is undecidable. However, it remains a well-known open problem
whether this problems is decidable when we restrict to \emph{length
constraints}. We say that $f\colon (A\cup\{\#\})^*\to\Z^n$ is a \emph{length
constraint} if each projection to a component of $\Z^n$ is a linear combination
of length functions as in \cref{length-function} above. Thus, a length
constraint can express that $\sigma(X)$ has the same length as $\sigma(Y)$ for
$X,Y\in\Omega$. The problem of \emph{word equations with rational and length
constraints} is the following:
\begin{description}
\item[Given] A word equation $(U,V)$ with rational constraints and a length constraint $f\colon (A\cup\{\#\})^*\to\Z^n$
\item[Question] Is there a solution $\sigma$ that satisfies $f(\enc{\sigma})=\bzero$?
\end{description}

We will show here that a restriction of word equations with rational
constraints and counting constraints is decidable. In this restriction, we
require $f(\enc{\sigma})\ne\bzero$ rather than $f(\enc{\sigma})=\bzero$. The
problem of \emph{word equations with rational constraints and counting
inequations} is the following:
\begin{description}
\item[Given] A word equation $(U,V)$ with rational constraints and a counting function $f\colon (A\cup\{\#\})^*\to\Z^n$
\item[Question] Is there a solution $\sigma$ that satisfies $f(\enc{\sigma})\ne \bzero$?
\end{description}
Here, we show that word equations with rational constraints and counting inequations are decidable (\cref{counting-inequations-decidable}).

Before we come to the proof, let us mention that \cref{counting-inequations-decidable} yields a decidable fragment of word equations with length constraints. This is because in the special case that $f$ is a length constraint, the inequation $f(\enc{\sigma})\ne\bzero$ can be expressed using length \emph{equations} by standard tricks: Suppose $f=\lambda_1f_1+\cdots+\lambda_mf_m$ is a linear combination of length functions $f_1,\ldots,f_m$, with $\lambda_1,\ldots,\lambda_k>0$ and $\lambda_{k+1},\ldots,\lambda_m<0$, where $f_i(\enc{\sigma})=|\sigma(X_i)|$ for variable $X_i\in\Omega$. Then $\sigma$ is a solution with $f(\enc{\sigma})\ne\bzero$ if and only if there is a letter $a\in A$ and a system with
\begin{align*} X=X_1^{\lambda_1}\cdots X_k^{\lambda_k},&&Y=X_{k+1}^{|\lambda_{k+1}|}\cdots X_m^{|\lambda_m|}, && {|X|=|YaZ}| \end{align*}
or
\begin{align*} X=X_1^{\lambda_1}\cdots X_k^{\lambda_k},&&Y=X_{k+1}^{|\lambda_{k+1}|}\cdots X_m^{|\lambda_m|},&& |Y|=|XaZ|, \end{align*}
depending on whether $f(\enc{\sigma}) >\bzero$ or $f(\enc{\sigma}) < \bzero$. where $X,Y,Z$ are fresh variables. This can easily be expressed by word
equations with length constraints (relying on the standard fact that several
equations for words can be translated into a single equation). Thus,
\cref{counting-inequations-decidable} yields in particular a decidable
non-trivial fragment of word equations with length constraints.
Similarly, counting inequations can always be encoded in word equations with counting constraints.

One can view \cref{counting-inequations-decidable} as showing that it is
decidable whether \emph{all solutions} $\sigma$ to a word equation $(U,V)$ with
rational constraints satisfy $f(\enc{\sigma})=\bzero$ for some given counting
function $f$. One might wonder whether in all these cases, satisfaction of
$f(\enc{\sigma})=\bzero$ for all solutions $\sigma$ is already implied by the
equality $\Parikh{\sigma(U)}=\Parikh{\sigma(V)}$ and the rational constraints.
The following example shows that this is not the case.

\begin{example}
	Suppose $\Omega=\{X_1,X_2,X_3\}$, $A=\{a,b,c\}$, and consider the word equation
	\[ abX_1cX_1ba=X_1abcX_2X_3 \]
	with rational constraints $X_1,X_3\in\{a,b\}^*$, and
	$X_2\in\{ab\}^*ba$. Under these constraints, the equation is equivalent
	to the pair of equations $abX_1=X_1ab$ and $X_1ba=X_2X_3$. It is a
	basic fact from word combinatorics (see e.g.~\cite[Exercise
	2.5]{berstel79}) that the equation $abX_1=X_1ab$ is equivalent to
	$X_1\in\{ab\}^*$. This means, $X_1ba=X_2X_3$ and $X_2\in\{ab\}^*ba$
	imply $|X_3|=0$. Thus, the length constraint
	$f\colon(A\cup\{\#\})^*\to\Z$ with $f(u\#v\#w)=|w|$ satisfies
	$f(\enc{\sigma})=0$ for every solution $\sigma$. However, if we
	take the equation
	\[ abX_1cX_1ba=abX_1cX_2X_3 \]
	instead (with the same rational constraints), then $abX_1=abX_1$ is
	satisfied for any $X_1$. In particular, for every $k\ge 1$, we have a
	solution $\sigma$ with $\sigma(X_1)=ab(ba)^k$, $\sigma(X_2)=abba$,
	$\sigma(X_3)=(ba)^{k}$, where $f(\enc{\sigma})=2k$. This shows that the
	satisfaction the satisfaction of counting constraints for all solutions
	of (U,V) doesn't depend only on $(\Parikh{U},\Parikh{V})$ and the
	rational constraints imposed on the variables. 
\end{example}

The remainder of this section is devoted to proving
\cref{counting-inequations-decidable}. 

\mysubsection{Transductions}
First, we need a standard notion from language theory. A \emph{finite-state
transducer} is defined like a counting transducer, except that instead of the
$\Z^n$ component on the edges, we have words over some output alphabet
$\Gamma^*$. It recognizes a relation $R\subseteq\Sigma^*\times\Gamma^*$ which
is defined in the same way as for counting transducers (except that instead of
adding output vectors, we concatenate output words). A relation
$R\subseteq\Sigma^*\times\Gamma^*$ is \emph{rational} if it is recognized by
some finite-state transducer. For a language $L\subseteq\Sigma^*$, we define
$RL:=\{v\in\Gamma^* \mid \exists (u,v)\in R\colon u\in L\}$. It is well-known
that indexed languages are (effectively) closed under applying rational
transductions. In other words, for every indexed language $L$, we can compute
an indexed grammar for the language $RL$.

\mysubsection{Solution sets of word equations}
Our proof crucially relies on a structural description of solution sets of word
equations with rational constraints from \cite[Theorem
1]{DBLP:conf/icalp/CiobanuDE15}:
\begin{theorem}[Ciobanu, Elder, Diekert 2015]\label{solutions-edt0l}
For every word equation $(U,V)$ with rational constraints, the language
\begin{equation} \{\enc{\sigma} \mid \text{$\sigma\colon \Omega\to A^*$ is a solution} \} \label{solution-set}\end{equation}
is an EDT0L language.
\end{theorem}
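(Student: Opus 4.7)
The plan is to prove Theorem~\ref{solutions-edt0l} by adapting Je\.z's recompression technique to produce an EDT0L description of the solution set. At the high level, the approach transforms the equation through a nondeterministic sequence of compression steps, and each step corresponds to a homomorphism on encoded solutions; the set of successful step-sequences forms the rational control of the EDT0L system, while the inverse homomorphisms serve as its tables.

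More concretely, I would first fix a word equation $(U,V)$ with rational constraints $R_{X_1},\dots,R_{X_k}$ and work over an extended, growing alphabet $C$ containing $A$. I would then define two kinds of nondeterministic moves on a current equation $(U_i,V_i)$ over $C$: \emph{pair compression}, in which a chosen pair $ab$ of distinct adjacent letters is replaced everywhere by a fresh letter $c$, and \emph{block compression}, in which maximal runs $a^n$ of a chosen letter are replaced by a single fresh letter $a_n$. Each move is paired with its \emph{decompression} homomorphism $h\colon C^*\to C^*$ (sending $c\mapsto ab$ or $a_n\mapsto a^n$ and fixing the other letters), so that if $\sigma_{i+1}$ encodes a solution to the compressed equation then $h(\enc{\sigma_{i+1}})$ encodes a solution of $(U_i,V_i)$. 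The key point is that with appropriate guesses of which pairs/blocks to compress (including placement of boundaries inside variable images), every solution $\sigma$ of the original equation arises, after some finite number of steps, as an iterated decompression of a trivial solution of a reduced equation that contains no variable occurrences.

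Next I would argue that the successful move-sequences form a regular language. This is the role of Plandowski's exponent-of-periodicity bound, which in the recompression setting bounds how often each pair or block can need to be compressed before the equation degenerates. Combined with a small state tracking the current equation's ``shape'' up to the information needed to continue (maintained finitary by the same bound), this yields a finite automaton $\mathcal A$ whose accepted words are exactly the sequences of moves corresponding to actual solutions. Rational constraints $R_{X_i}$ are incorporated by refining letters with tuples of states of the DFAs for the $R_{X_i}$, so that the constraint ``$\sigma(X_i)\in R_{X_i}$'' is recorded positionally and preserved by both compression steps; this is a standard product construction that enlarges $C$ but keeps it finite.

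Assembling these pieces gives an EDT0L system: the axiom is the encoding of the (finitely many) base solutions of reduced equations, the tables are the finite family of decompression homomorphisms associated to single moves, and the rational control language is $L(\mathcal A)$. The generated language, intersected with the terminal alphabet $A\cup\{\#\}$, is exactly the set \eqref{solution-set}. The main obstacle, and the technical heart of \cite{DBLP:conf/icalp/CiobanuDE15}, is the third step: showing that the bookkeeping required to keep recompression deterministic enough to be EDT0L (rather than merely ET0L) can be confined to a \emph{regular} control language, despite the fact that the compressed equations themselves grow dynamically. This is where Plandowski-style periodicity bounds and a careful choice of pair/block scheduling are essential.
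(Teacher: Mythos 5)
The paper you are working from does not actually prove this theorem: it is imported verbatim, with attribution, from Ciobanu, Diekert, and Elder's ICALP 2015 paper \cite{DBLP:conf/icalp/CiobanuDE15}. The authors use it as a black box (all they need is that EDT0L languages are effectively indexed languages), so there is no ``paper's own proof'' to compare your argument against.

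That said, your sketch is a reasonable high-level summary of how the cited work establishes the result via Je\.z's recompression: nondeterministic pair and block compression steps, decompression homomorphisms playing the role of EDT0L tables, and a finite-state control coming from the fact that the working equation can be kept to bounded size throughout the procedure. One small inaccuracy worth flagging: the regularity of the control language does not come primarily from Plandowski's exponent-of-periodicity bound. That bound is needed to keep block compression from introducing unboundedly many fresh letters, but the finiteness of the control automaton really rests on the invariant that recompression can always be scheduled so that the compressed equation stays within a fixed size bound (polynomial in the input), which gives a finite set of ``configurations'' to track. You should also be careful about how rational constraints are folded in: the standard treatment does not annotate letters with DFA states but rather records, for each variable, the transition monoid element induced by its image, and checks at the end that it lies in the accepting set; this survives compression because the monoid element of a compressed letter is defined as the product of the monoid elements of what it decompresses to. These are details of the cited construction rather than of the present paper, but if your goal is to reproduce \cite{DBLP:conf/icalp/CiobanuDE15}, they are where the technical weight actually sits.
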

Here, we do not need the exact definition of an EDT0L language, it suffices to
know that the EDT0L languages are (effectively) a subclass of the indexed
languages. This allows us to prove \cref{counting-inequations-decidable}.
\begin{proof}[Proof of \cref{counting-inequations-decidable}]
Given a word equation $(U,V)$ with rational constraints and a counting function $f\colon (A\cup\{\#\})^*\to\Z^n$, we first construct an indexed grammar for the language \cref{solution-set}, which we denote $L$. Next, from $f$, we construct a finite-state transducer for a relation $R_f\subseteq (A\cup\{\#\})^*\times\Gamma^*$ with $\Gamma=\{a_1,\bar{a}_1,\ldots,a_n,\bar{a}_n\}$ such that for every $u\in (A\cup\{\#\})^*$, there is a word $v\in\Gamma^*$ such that
\[ f(u)=(|v|_{a_1}-|v|_{\bar{a}_1},\ldots,|v|_{a_n}-|v|_{\bar{a}_n}) \]
and $(u,v)\in R_f$. This is easy: Given the counting transducer for $f$, we just
turn each edge labeled $(w,(x_1,\ldots,x_n))$ into one labeled
$(w,a_1^{(x_1)}\cdots a_n^{(x_n)})$, where $a_i^{(x_i)}$ is defined as $a_i^{x_i}$ if
$x_i\ge 0$ and as $\bar{a}_i^{|x_i|}$ if $x_i<0$. Now clearly, our instance of word equations with rational constraints and counting inequations is \emph{negative} if and only if $\Parikh{R_fL}\subseteq S$, where  
	\[ S=\{\bz\in\N^{\Gamma} \mid \bz(a_1)=\bz(\bar{a}_1),~\ldots,~\bz(a_n)=\bz(\bar{a}_n) \}. \]
Notice that $S$ is a slice, since it is the solution set of a homogeneous
linear equation. Therefore, our instance is negative if and only if
$\Sop^\omega(\Parikh{R_fL})\subseteq S$, which we can decide because we can
construct an indexed grammar for $R_fL$ and \cref{slice-closures} allows us to
compute a Presburger formula for $\Sop^\omega(\Parikh{R_fL})$.
\end{proof}

\subsubsection*{Remark on the proof of \cref{counting-inequations-decidable}}
It should be noted that if $L\subseteq\{a_1,\bar{a}_1,...,a_n,\bar{a}_n\}$ is
an EDT0L language, one can use affine programs (i.e.\ the results by
Karr~\cite{DBLP:journals/acta/Karr76} and Müller-Olm and
Seidl~\cite{DBLP:conf/icalp/Muller-OlmS04}) or weighted automata over a field~\cite[p.143--145]{Eilenberg1974} (see also~\cite[Chapter 8]{BojanczykCzerwinski2018}) to decide whether for every word in $L$, the number of
$a_i$ is the same as the number of $\bar{a}_i$. (As far as we can tell, this
has not been observed in the literature, but it is not difficult to show.)
However, it is crucial in our proof that we can do this for the language
$R_fL$, which is obtained using the rational transduction $R_f$ from an EDT0L language,
and EDT0L languages are not closed under rational transductions (see, e.g.\
\cite[Lemma 3.7]{ehrenfeucht1976relationship}), whereas indexed languages are. It seems possible that for
some (but most likely not all) counting functions $f$, one can modify the proof of
\cite[Thm.~1]{DBLP:conf/icalp/CiobanuDE15} to directly yield an EDT0L language
for $R_fL$, and then apply the aforementioned methods. This remains to be
explored.

\section{Conclusion}
This work initiates the study of slices closures of indexed languages, and thus
models of higher-order recursion. There are many questions that arise. The most
immediate question is whether slice closure computation (or decidability of
\eqref{kobayashi-inclusion}) is also possible for safe or unsafe HORS of orders $\ge 3$.

Second: What is the complexity of computing
slice closures of indexed languages? The algorithm presented here is a
saturation algorithm where each step involves an enumeration-based procedure for
computing the slice closure of a semilinear set. The first challenge is the lack
of complexity bounds for computing the slice closure of a semilinear
set. However, even with such a bound, it would not be clear how to bound the saturation. Since the ascending chain condition
for slices (and thus termination of our algorithm) ultimately results from
Dickson's lemma, it seems tempting to use length function theorems for vector
orderings (e.g.~\cite[Thm.~3.15]{Schmitz17}) to at least obtain an Ackermannian
upper bound. In~\cite[Prop.~4.8]{DBLP:conf/icalp/GanardiMPSZ22}, such an
approach has been developed for ascending chains of \emph{congruences}, but it
is not clear how to extend this to slices. 

Finally: Can we decide whether every vector in $\Parikh{L}$
satisfies a given polynomial equation? For affine programs, this is
decidable~\cite{DBLP:journals/ipl/Muller-OlmS04}, and one can even compute the
Zariski closure~\cite{DBLP:journals/jacm/HrushovskiOPW23}.

\begin{acks}
\raisebox{-9pt}[0pt][0pt]{\includegraphics[height=.8cm]{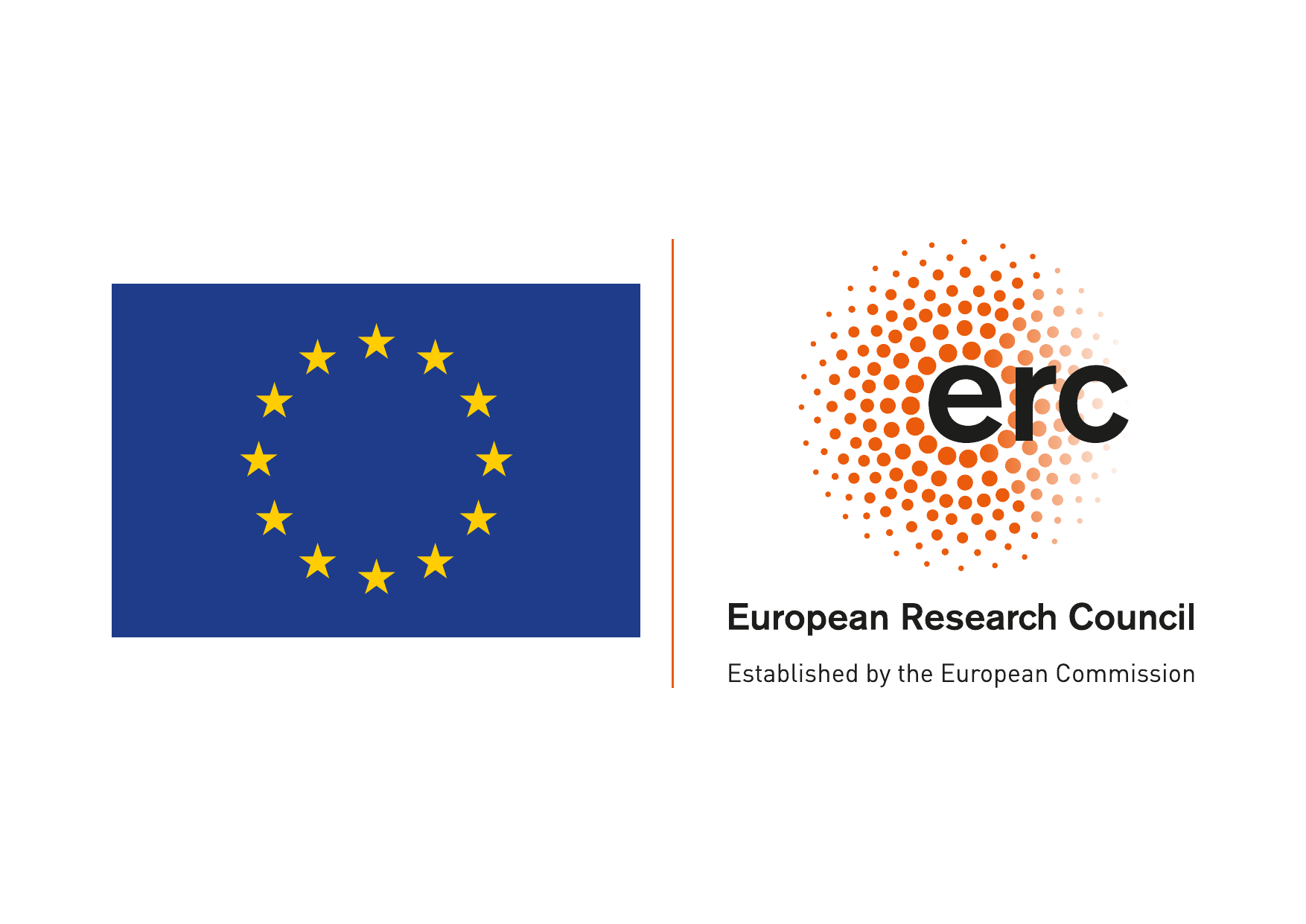}}
Funded by the European Union (\grantsponsor{501100000781}{ERC}{http://dx.doi.org/10.13039/501100000781}, FINABIS, \grantnum{501100000781}{101077902}).
Views and opinions expressed are however those of the authors only and do not necessarily reflect those of the European Union
or the European Research Council Executive Agency.
Neither the European Union nor the granting authority can be held responsible for them.
\end{acks}

\printbibliography
\end{document}